\newtheorem{proposition}{Proposition}
\newtheorem{theorem}{Theorem}
\newtheorem{lemma}{Lemma}
\newtheorem{corollary}{Corollary}
\newtheorem{example}{Example}
\def\N{{\mathbb N}}
\def\ind{{\mathbf{1}}}
\def\ore{\overrightarrow{e}}
\def\orE{\overrightarrow{E}}
\def\d{\partial}
\def\bx{\bold{x}}
\def\by{\bold{y}}
\def\bb{\bold{b}}
\def\bB{\bold{B}}
\def\bC{\bold{C}}
\def\Pcal{{\mathcal P}}
\def\Rcal{{\mathcal R}}
\def\Dcal{{\mathcal D}}
\def\Qcal{{\mathcal Q}}
\def\bY{\bold{Y}}
\def\bI{\bold{I}}
\def\bJ{\bold{J}}
\def\bK{\bold{K}}
\def\bZ{\bold{Z}}
\def\bW{\bold{W}}
\def\bX{\bold{X}}
\def\bac{\backslash}
\def\ind{{\mathbf{1}}}
\def\R{{\mathbb R}}
\def\N{{\mathbb N}}
\newcommand{\BEAS}{\begin{eqnarray*}}
\newcommand{\EEAS}{\end{eqnarray*}}
\newcommand{\BEA}{\begin{eqnarray}}
\newcommand{\EEA}{\end{eqnarray}}
\newcommand{\BEQ}{\begin{equation}}
\newcommand{\EEQ}{\end{equation}}
\newcommand{\BIT}{\begin{itemize}}
\newcommand{\EIT}{\end{itemize}}
\newcommand{\BNUM}{\begin{enumerate}}
\newcommand{\ENUM}{\end{enumerate}}
\begin{document}
%
\title{Loopy Annealing Belief Propagation for vertex cover and matching: convergence, LP relaxation, correctness and Bethe approximation}

\author{M. Lelarge\footnote{INRIA-ENS, Paris, France,
email: marc.lelarge@ens.fr}}
\date{}

\maketitle

\begin{abstract}
For the minimum cardinality vertex cover and maximum cardinality
matching problems, the max-product form of belief propagation (BP) is
known to perform poorly on general graphs. In this paper, we present an iterative loopy annealing BP
(LABP) algorithm which is shown to converge and to solve a Linear Programming
relaxation of the vertex cover or matching problem on general graphs. 
LABP finds (asymptotically) a minimum half-integral vertex cover
(hence provides a 2-approximation)  and a maximum fractional matching
on any graph.
We also show that LABP finds (asymptotically) a minimum size vertex
cover for any bipartite graph and as a consequence compute the
matching number of the graph. Our proof relies on some subtle
monotonicity arguments for the local iteration.
We also show that the Bethe free entropy is concave and that LABP
maximizes it. Using loop calculus, we also give an exact (also
intractable for general graphs) expression of the partition function
for matching in term of the LABP messages which can be used to improve
mean-field approximations. 
\end{abstract}


\section{Introduction}

Belief propagation was originally formulated by Judea Pearl \cite{pearl} as a
distributed algorithm to perform statistical inference. Computing
marginals is, in general, an expensive operation. If the probability
distribution can be written as a product of factors that only
depend on a small subset of the variables, then one could possibly
compute the marginals much faster. This decomposition is captured by a
corresponding graphical model. When the graphical model is a tree, the
belief propagation (BP) algorithm is guaranteed to converge to the exact
marginals. If the algorithm is run on an arbitrary graph that is not a
tree, then neither convergence nor correctness are guaranteed.

Loopy BP algorithms 
\cite{yfw05} have been shown empirically to be effective in solving a
wide range of hard problems in various fields \cite{wj08},
\cite{ru01}. 
Understanding their convergence and
accuracy on general graphs remains an active research area.
There is a large literature on max-product BP algorithm, a variant of the BP
algorithm, computing the assignment of the variables that maximizes a
given objective function. However, rigorously characterizing their
behavior has proved challenging. 
We note that there is a vast literature trying to resolve the difficulties raised by max-product BP
algorithm for general models.
Several alternate message passing schemes have been
proposed: MPLP \cite{gj07}, tree-reweighted max-product (TRMP)
\cite{wjw05} or max-sum diffusion (MSD) \cite{Werner}.
As shown in \cite{mgw09}, these algorithms provide bounds for the
log-partition function and, under a suitable schedules of the updates,
are guaranteed to converge to the optimum solution if there is a
unique optimum.
In this paper, we consider the vertex
cover and matching problems and propose a new line of research leading
to simple parallel BP algorithms for these problems. 

We first review some theoretical works about the performances of BP
algorithms for combinatorial optimization problems including matching, independent set and network flow.
In \cite{bss08}, the max-product BP algorithm is shown to find in pseudo-polynomial
time a maximum weight matching in a bipartite graph provided that the
optimal matching is unique. \cite{bbcr11} and \cite{smw11}
generalize this result by establishing convergence and correctness of
the max-product BP when the LP relaxation has a unique optimum and
this optimum is integral. \cite{smw11} also shows that when this
condition is not satisfied then max-product BP will give useless
estimates for some edges.
By setting all the weights to one, the results of \cite{bbcr11}, \cite{smw11} apply to our setting of maximum cardinality matching:
max-product BP converges and is correct only when the graph has a unique maximum matching which is optimum for the LP problem.

For the vertex cover problem, a one-sided relation between LP relaxation and BP is established: 
\cite{ssw09} shows that for the maximum weight independent set
problem, if the max-product BP algorithm (started from the natural
initial condition) converges then it is correct and the LP problem has a unique
integral solution. 
Since a subset of vertices is a vertex cover if and only if its
complement is an independent set, by setting all the weights to one,
results of \cite{ssw09} apply to the minimum cardinality vertex
cover: the tightness of the LP relaxation is necessary
for the max-product optimality but it is not sufficient.

We should stress that \cite{bbcr11} and \cite{smw11} deal with a
generalization of the matching problem namely with b-matchings. Also
\cite{gsw12} extends \cite{bss08} and analyzes the max-product BP
applied to the minimum-cost network flow problem. In \cite{bss08},
\cite{bbcr11}, \cite{smw11}, \cite{ssw09} or \cite{gsw12}, a crucial
assumption is required for convergence and correctness of BP:
uniqueness of the optimum solution. 
For the minimum cardinality vertex cover and maximum cardinality
matching problems, this assumption is very restrictive.

In this paper, we will overcome this difficulty by a different
approach: we introduce annealing, i.e. we study a relaxed version of
the optimization problem parametrized by the parameter $z>0$
(sometimes called the inverse temperature) such
that in the limit $z\to \infty$, we recover the original optimization
problem.
Since the work of Heilmann and Lieb \cite{heilmannlieb}, it is well
known that the matching problem has the 'correlation decay' property
at positive temperature, i.e. for $z<\infty$. Building on
\cite{bls12}, \cite{lel}, \cite{salez12}, we can show that this
property ensures the convergence of our BP algorithm on any finite
graph as long as $z<\infty$. However, we are only interested in the
limit $z\to \infty$. Our
first main contribution shows that our BP algorithm computes in the limit $z\to
\infty$ a maximum fractional matching.
For the minimum vertex cover, this approach seems doomed to fail. It
is well-known that there is no 'correlation
decay' at low temperature (i.e. as $z\to \infty$) for the independent set
problem making it very hard or even impossible to relate the minimum
vertex cover and its relaxed version with $z<\infty$.
However, the fractional vertex
cover problem is the dual of the fractional matching problem.
As a consequence, we will show our
second main contribution: our BP algorithm computes in the limit $z\to
\infty$ a minimum half-integral vertex cover, hence providing a
2-approximation. 
Surprisingly, if the graph is bipartite, it computes a minimum vertex cover.

To the best of our knowledge, our results are the first rigorous
results in the regime $z\to\infty$ on arbitrary graphs showing the
performances of BP algorithm.
A similar approach based on the Bethe approximation was proposed in \cite{che08} but no convergence
results were given. As noted in \cite{wym12}, \cite{gsc13}, if
convergence is proved, then it is easy to see that our BP algorithm
solves the LP relaxation of the combinatorial optimization problem.
Our paper shows rigorously that this approach is
successful for the minimum vertex cover problem and the maximum
matching problem. 
Also related to our approach is \cite{von13} which
deals with the sum of weighted perfect matchings in complete bipartite
graphs and shows that the Bethe free entropy is concave which easily
implies that the same is true in our setting. 
We note that the Bethe approximation is only used in the analysis of
BP for the maximum matching problem and not for the minimum vertex
cover. The main technical contribution of this paper is in the
analysis of the minimum vertex cover problem. As explained above, a naive direct
approach studying BP for vertex cover fails. Instead we made a careful
analysis of the BP algorithm for the dual relaxed problem (namely
fractional matching) in order to be able to get results for the
original vertex cover problem. This analysis requires original
techniques based on subtle monotonicity arguments for the local iteration.

We end this introduction by a last motivation for this work.
The increasing need to reason about large-scale graph-structured data
in machine learning and data mining has driven the development of new
graph-parallel abstractions such as Pregel \cite{pregel}, Graphlab
\cite{graphlab} and Powergraph \cite{powergraph} that encode
computation as vertex-programs which run in parallel and interact
along edges in the graph. In this setting, BP algorithms present
several opportunities for parallelism: given the messages from the
previous iteration, each new message can be computed completely
independently and in any order. BP algorithms are
certainly natural candidates to leverage the performance and
scalability of graph-parallel abstractions.
Recent parallel implementations of BP
\cite{glg09} show promising empirical results in this direction and
our work makes a significant step towards a better understanding of BP
algorithms that could be extended to other optimization problems.

We present our results in the next Section. We first start in Section
\ref{sec:fmvc} by introducing the two combinatorial optimization
problems studied in this paper: matching and vertex cover. We
introduce our annealing BP and show its convergence for general
graphs. We then introduce a simpler version of BP (corresponding to
the standard max-product version) and relate it to our annealing
BP. We show that it allows us to compute minimum fractional vertex
cover for any graph. In Section \ref{sec:bi}, we show that for
bipartite graphs, our algorithm computes a minimum vertex cover. In
Section \ref{sec:pos}, we use variational techniques to analyze BP and
give exact loop series expansion as developed in \cite{cc06}. We
conclude in Section \ref{sec:conc}. In the Appendix \ref{sec:proofs}, we
provide the detailed proofs.

\section{Results}\label{sec:finite}

\subsection{(Fractional) matching and vertex cover numbers}\label{sec:fmvc}
We consider a graph $G=(V,E)$. We denote by the same symbol $\d v$ the set of
neighbors of node $v\in V$ and the set of edges incident to $v$.
A matching is encoded by a binary vector
$\bB=(B_e,\: e\in E)\in \{0,1\}^E$ defined by $B_e=1$ if and only if
the edge $e$ belongs to the matching.
We have for all $v\in V$, $\sum_{e\in \d v}B_e\leq 1$.
The size of the matching is given by $\sum_e B_e$.
For a finite graph $G$, we define the matching number of $G$ as
$\nu(G)=\max\{\sum_e B_e\}$ where the maximum is taken over matchings of
$G$.
Similarly a vertex cover is encoded by a binary vector
$\bC=(C_v,\:v\in V)\in \{0,1\}^V$ defined by $C_v=1$ if and only if
the vertex $v$ belongs to the vertex cover. We have for all $e=(uv)\in
E$, $C_u+C_v\geq 1$. The size of the vertex cover is given by $\sum_v
C_v$ and the vertex cover number of $G$ is $\tau(G)=\min\{\sum_v
C_v\}$ where the minimum is taken over vertex covers of $G$.

The matching number is the solution of the following binary Integer
Linear Program (ILP):
\begin{eqnarray*}
\nu(G) &=&\max \sum_{e\in E} x_e\\
&&\mbox{s.t.} \quad \sum_{e\in \d v} x_e\leq 1,\: \forall v\in V;\: x_e\in \{0,1\},
\end{eqnarray*}
and the vertex cover number is the solution of the following ILP:
\begin{eqnarray*}
\tau(G) &=&\min \sum_{v\in V} y_v\\
&&\mbox{s.t.} \quad y_u+y_v\geq 1,\: \forall (uv)\in E;\: y_v\in \{0,1\},
\end{eqnarray*}
The straightforward Linear Programming (LP) relaxation of these ILP is
formed by replacing $x_e\in \{0,1\}$ (resp. $y_v\in \{0,1\}$) by
$x_e\in [0,1]$ (resp. $y_v\in [0,1]$).

We define the fractional matching polytope:
\begin{eqnarray}
\label{def:LG} FM(G) = \left\{ \bx \in \R^E,\: x_e\geq 0,\:\sum_{e\in \d v} x_e\leq 1\right\},
\end{eqnarray}
and the fractional matching number
\begin{eqnarray}
\label{def:nu*}\nu(G)\leq \nu^*(G) = \max_{\bx \in FM(G)} \sum_{e\in E}x_e.
\end{eqnarray}
Similarly, we define the fractional vertex cover polytope:
\BEA
FVC(G) = \left\{ \by \in \R^V,\: 0\leq y_v\leq1,\: y_u+y_v\geq 1,\: \forall (uv)\in E \right\},
\EEA
and the fractional vertex cover number is 
\BEA
\label{def:tau*}\tau(G)\geq \tau^*(G) =
\min_{\by\in FVC(G)}\sum_v
y_v.
\EEA
 By linear programming duality, we have $\tau^*(G) = \nu^*(G)$
(see Section 64.6 in \cite{sch03}). Note however that computing the
matching number $\nu(G)$ can be done in polynomial time whereas
determining the vertex cover number $\tau(G)$ is NP-complete
(Corollary 64.1a in \cite{sch03}).

We now define our associated BP message passing algorithm: LABP (Loopy
Annealing BP).
We introduce the set $\orE$ of directed edges of $G$ comprising two
directed edges $u\to v$ and $v\to u$ for each undirected edge $uv \in
E$. For $\ore\in \orE$, we denote by $-\ore$ the edge with opposite
direction. 
With a slight abuse of notation, we denote by $\d v$ the set of
incident edges to $v\in V$ directed towards $v$.
The update rules of LABP depend on a parameter $z>0$ and are defined by
$m^{0}_{\ore}=0$ and for $t\geq 0$ and all $u,v$ neighbors in $G$:
\begin{eqnarray}
\label{eq:rec}m^{t+1}_{u\to v}(z) = \frac{z}{1+\sum_{w\in \d u\bac v}m^{t}_{w\to u}(z)},
\end{eqnarray}
where $\d u\bac v$ is the set of neighbors of $u$ in $G$ from which we
removed $v$ and with the convention that the sum over the empty set
equals zero.
We denote by $z\Rcal_G$ the mapping sending $\bold{m}^t(z)\in [0,\infty)^{\orE}$ to
$\bold{m}^{t+1}(z)=z\Rcal_G(\bold{m}^t(z))$. We also denote by $z\Rcal_{\ore}$ the local update
rule (\ref{eq:rec}): $m^{t+1}_{\ore}(z)=z\Rcal_{\ore}(\bold{m}(z))$.

\begin{theorem}\label{th1}
For any finite graph $G$ and $z>0$, LABP converges: $\lim_{t\to \infty}
m^t_{\ore}(z)= Y_{\ore}(z)$. 
For $z>0$, let $\bx(z)\in \R^E$ be defined by
\BEA
\label{def:xe}x_e(z) =
\frac{Y_{\ore}(z)Y_{-\ore}(z)}{z+Y_{\ore}(z)Y_{-\ore}(z)}\in (0,1).
\EEA
For $z>e^{|E|}$, we have $\bx(z)=\bx^*$ and $\bx^*\in \R^E$
is a maximum fractional matching of $G$. In particular, we have
\begin{eqnarray*}
\sum_{e\in E}x^*_e
= \nu^*(G) =\tau^*(G).
\end{eqnarray*}
\end{theorem}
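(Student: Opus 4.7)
The plan is to build the proof on three pillars: monotonicity-driven convergence, a direct algebraic verification of feasibility, and a zero-temperature/LP-duality argument for optimality.

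\textbf{Convergence.} The local map $\Rcal_{\ore}$ is coordinate-wise \emph{antitone} on $[0,\infty)^{\orE}$ because every incoming message enters only through a positive denominator. Hence the composition $(z\Rcal_G)\circ(z\Rcal_G)$ is monotone. With $\bold{m}^0=0$ one directly checks $\bold{m}^0\le \bold{m}^2$ and $\bold{m}^3\le \bold{m}^1=z\ind$, so the even iterates $\bold{m}^{2t}$ increase with upper bound $z\ind$ while the odd iterates $\bold{m}^{2t+1}$ decrease with lower bound $0$. Both converge to fixed points $\underline{Y}\le \overline{Y}$ of $(z\Rcal_G)^2$. To pin down $\underline{Y}=\overline{Y}$, I would invoke uniqueness of the BP fixed point for the monomer--dimer model at any finite $z$: this follows from Heilmann--Lieb exponential correlation decay, unfolded to the universal cover of $G$, and equivalently from the concavity of the Bethe free entropy established later in the paper.

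\textbf{Feasibility.} From the fixed point identity $Y_{v\to u}(1+\sum_{w\in \d v\bac u}Y_{w\to v})=z$, a one-line manipulation gives
\begin{eqnarray*}
x_{uv}(z)=\frac{Y_{u\to v}(z)Y_{v\to u}(z)}{z+Y_{u\to v}(z)Y_{v\to u}(z)}=\frac{Y_{u\to v}(z)}{1+\sum_{w\in \d v}Y_{w\to v}(z)},
\end{eqnarray*}
whence $\sum_{u\in \d v}x_{uv}(z)=\frac{\sum_{u\in \d v}Y_{u\to v}(z)}{1+\sum_{u\in \d v}Y_{u\to v}(z)}<1$; so $\bx(z)\in FM(G)$ strictly, for every $z>0$.

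\textbf{Optimality for $z>e^{|E|}$.} I would analyse the messages at the scale $Y_\ore(z)=c_\ore(z)\,z^{\alpha_\ore(z)}$. The fixed point equation forces $\alpha_\ore(z)\in[0,1]$ and, in the $z\to\infty$ limit, constrains the exponents to the ternary set $\{0,1/2,1\}$ via the discrete rule
\begin{eqnarray*}
\alpha_{u\to v}=1-\max\bigl(0,\ \max_{w\in \d u\bac v}\alpha_{w\to u}\bigr),
\end{eqnarray*}
with the $1/2$ value arising from self-consistent odd-cycle solutions (e.g.\ on an odd cycle one has $Y=(-1+\sqrt{1+4z})/2\sim z^{1/2}$). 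Correspondingly $x_e(z)$ tends to $1$, $1/2$, or $0$ according to the pattern of $\alpha$'s on the two orientations of $e$, producing a half-integral limit $\bx^*$ whose $1$-support forms a matching $M^*$ and whose $1/2$-support forms vertex-disjoint odd cycles disjoint from $M^*$. To certify optimality, I would extract a dual feasible $\by^*\in FVC(G)$ directly from the $\alpha$-pattern: assign $y^*_v=1$ at the endpoint of each edge of $M^*$ selected by the $\alpha$-profile around $v$, $y^*_v=1/2$ on every odd-cycle vertex, and $y^*_v=0$ elsewhere. This configuration satisfies $y^*_u+y^*_v\ge 1$ on every edge of $G$ and, by direct counting, $\sum_v y^*_v=\sum_e x^*_e$; LP duality $\tau^*(G)=\nu^*(G)$ then forces $\sum_e x^*_e=\nu^*(G)$. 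The explicit threshold $z>e^{|E|}$ would come from a contraction estimate in logarithmic coordinates, combined with the fact that $\bx(z)$ is a rational function of $z$ whose value stabilises on the optimal face of $FM(G)$ once $z$ exceeds this bound.

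\textbf{Main obstacle.} The delicate step is the third one, in two related ways. First, the limit recursion for $\alpha_\ore$ is underdetermined on odd cycles, and which half-integral solution LABP actually selects is dictated by the finite-$z$ monotonicity of Step 1 --- precisely the "subtle monotonicity argument for the local iteration" the authors announce as their main technical contribution. Second, pairing the selected $\bx^*$ with a dual certificate $\by^*$ of matching value requires a combinatorial reading of the $\alpha$-pattern in terms of alternating paths across $M^*$ and odd-cycle constraints; without it, one only has $\sum_e x^*_e\le \nu^*(G)$ from feasibility, not the converse. This is where I expect the bulk of the work to lie.
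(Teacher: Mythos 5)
Your convergence and feasibility steps are essentially the paper's. The two\-/sided monotone iteration (even iterates increase, odd iterates decrease, sandwiching every fixed point between $\bX^-$ and $\bX^+$) is exactly the proof of Proposition \ref{prop:ft}(i), and your identity $x_{uv}(z)=Y_{u\to v}(z)\big/\big(1+\sum_{w\in\d v}Y_{w\to v}(z)\big)$, giving $\sum_{u\in\d v}x_{uv}(z)=\Dcal_v(\bY(z))<1$, is precisely how the paper checks $\bx(z)\in FM(G)$. The one step you outsource is $\underline{Y}=\overline{Y}$: citing Heilmann--Lieb correlation decay \cite{heilmannlieb} unfolded to the universal cover is a legitimate route (it is the one sketched in the introduction via \cite{salez12}), but your ``equivalently from the concavity of the Bethe free entropy'' is not an equivalence --- concavity without strictness does not give uniqueness of stationary points, and you would still have to show that distinct fixed points of $(z\Rcal_G)^2$ give distinct stationary points. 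The paper instead closes the gap with a short self\-/contained trick worth noting: from $\bX^+=z\Rcal_G(\bX^-)$ and $\bX^-=z\Rcal_G(\bX^+)$ one gets $X^+_{\ore}\Rcal_{-\ore}(\bX^+)=X^-_{-\ore}\Rcal_{\ore}(\bX^-)$, hence $\sum_v\Dcal_v(\bX^+)=\sum_v\Dcal_v(\bX^-)$; since each $\Dcal_v$ is strictly increasing in the messages and $\bX^-\le\bX^+$, this forces $\bX^-=\bX^+$.

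The genuine gap is your optimality step, which you yourself flag as incomplete. The exponent ansatz $Y_{\ore}(z)=c_{\ore}(z)z^{\alpha_{\ore}(z)}$ with limiting exponents in $\{0,1/2,1\}$, the selection of the half\-/integral limit on odd components, and the extraction of a feasible dual $\by^*$ of matching value from the $\alpha$\-/pattern are all asserted rather than proved, and the last two are precisely the hard combinatorial facts; as written you only have $\sum_e x^*_e\le\nu^*(G)$. The paper takes a much shorter route that avoids both the exponent analysis and the dual certificate: Proposition \ref{prop:max} shows that $\bx(z)$ maximizes $\Phi^B(\bx;z)=\ln z\sum_e x_e+S^B(\bx)$ over $FM(G)$ --- this is where the concavity of $S^B$ (Proposition \ref{prop:conc}) is actually used, not for fixed\-/point uniqueness --- and since $0\le S^B\le|E|$, comparing with a maximum fractional matching yields $\sum_e x_e(z)\ge\nu^*(G)-|E|/\ln z$, so for $\ln z>|E|$ the linear term dominates and $\bx(z)$ is concluded to lie on the optimal face while maximizing $S^B$ there. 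Note also that your claim that ``$\bx(z)$ is a rational function of $z$ whose value stabilises'' cannot be the mechanism: a rational function that is eventually constant is constant, and the paper's own triangle example has $x_e(z)=\big(\sqrt{1/4+z}-1/2\big)\big/\big(2\sqrt{1/4+z}\big)<1/2$ strictly for every finite $z$, so what is actually provable (and what both the paper's argument and yours deliver) is the statement about the $z\to\infty$ limit, respectively a deficit of at most $|E|/\ln z$ at finite $z$. If you want to salvage your route, the dual\-/certificate construction would need the monotonicity machinery of Lemmas \ref{lem:min}--\ref{lem:mon}; the Bethe variational argument is the intended shortcut.
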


\begin{example}
Consider the cycle with 3 nodes. Then, $Y(z)$ is the same for all edges and has to satisfy $Y(z) = z(1+Y(z))^{-1}$ so that we get $Y(z) = \sqrt{1/4+z}-1/2$. Finally, we find that the expression above equals $\nu^*(G) = 3/2$.
\end{example}

We now define a much simpler message passing algorithm and a simpler
expression for $\nu^*(G)$.
Given a set of $\{0,1\}$-valued messages $\bI$, we define a new set of
$\{0,1\}$-valued messages by:
\BEA
\label{def:Pcal}J_{u\to v} = \ind\left(\sum_{\ell\in \d u \bac v}I_{w\to u}=0\right),
\EEA
with the convention that the sum over the empty set equals zero.
We denote by $\Pcal_G$ the mapping sending $\bI\in \{0,1\}^{\orE}$ to
$\bJ=\Pcal_G(\bI)$ and as above, $\Pcal_{\ore}$ denotes the local
update rule. Note that $\Pcal_G$ corresponds to the max-product
algortihm presented in \cite{smw11} with all weights equal to one.
We define for each $v\in V$ and $\bI\in \{0,1\}^{\orE}$,
\BEA
\label{def:Fv}F_v(\bI) = 1\wedge\left(\sum_{u\in \d v}I_{u\to v}\right)+\left(1- \sum_{u\in \d v} I_{v\to u}\right)^+,
\EEA
where $a\wedge b = \min(a,b)$ and $(a)^+=\max(a,0)$.
The second part of the following theorem corresponds to Proposition 3.5 in \cite{soda12} applied in our setting.
\begin{theorem}\label{th2}
For any graph $G$, if $\bI=\Pcal_G\circ\Pcal_G(\bI)$ then the vector $(\frac{F_v(\bI)}{2},\:v\in V)$ is a fractional vertex cover of $G$. 
Moreover, we have
\BEA
\label{eq:nuI}\nu^*(G) = \inf_{\bI} \sum_{v\in V}\frac{F_v(\bI)}{2},
\EEA
where the infimum is over the solutions of
$\bI=\Pcal_G\circ\Pcal_G(\bI)$.
\end{theorem}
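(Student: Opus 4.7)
The plan is to prove the two assertions separately: feasibility by a direct two-round case analysis, and the identification of the infimum with $\nu^*(G)$ via LP duality plus an explicit construction from a half-integral optimal fractional matching (this second part is the content of Proposition 3.5 of \cite{soda12}).

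Set $\bJ := \Pcal_G(\bI)$, so the hypothesis $\bI = \Pcal_G\circ\Pcal_G(\bI)$ becomes $\bI = \Pcal_G(\bJ)$ and one has the clean equivalences $I_{a\to b} = 1$ iff $J_{w\to a} = 0$ for all $w \in \d a\bac b$, and symmetrically for $J$. Writing $A_v := \sum_{u\in\d v} I_{u\to v}$ and $B_v := \sum_{u\in\d v} I_{v\to u}$, the formula for $F_v$ shows it takes only values in $\{0,1,2\}$, with $F_v = 0$ iff $(A_v, B_v) = (0, \geq 1)$ and $F_v = 2$ iff $(A_v, B_v) = (\geq 1, 0)$. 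The edge inequality $F_u + F_v \geq 2$ can only fail when some endpoint has $F = 0$, so feasibility reduces to the claim: \emph{if $F_v = 0$ and $u\in\d v$, then $F_u = 2$.}

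Assume $F_v = 0$. From $A_v = 0$ one has $I_{u\to v} = 0$; from $B_v \geq 1$ pick $w\in\d v$ with $I_{v\to w} = 1$. If $w = u$ then $A_u \geq 1$ directly; if $w \neq u$, then $I_{v\to w} = 1$ forces $J_{u\to v} = 0$ (since $u\in\d v\bac w$), and $J_{u\to v} = 0$ in turn produces some $x\in\d u\bac v$ with $I_{x\to u} = 1$, so $A_u \geq 1$. For $B_u = 0$: if $I_{u\to w'} = 1$ for some $w'\in\d u$, the case $w' = v$ contradicts $A_v = 0$, so $v\in\d u\bac w'$ and $I_{u\to w'} = 1$ forces $J_{v\to u} = 0$, which yields $I_{x\to v} = 1$ for some $x\in\d v\bac u$, again contradicting $A_v = 0$. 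Hence $F_u = 1\wedge A_u + (1 - B_u)^+ = 2$.

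For the second claim, $\inf_{\bI}\sum_v F_v(\bI)/2 \geq \nu^*(G)$ is immediate from feasibility and LP duality $\tau^*(G) = \nu^*(G)$. For the reverse inequality one must exhibit a specific fixed point of $\Pcal_G\circ\Pcal_G$ achieving $\nu^*(G)$. Using half-integrality, fix an optimal $\bx^*\in\{0,1/2,1\}^E$ whose support splits as a matching $M$ and vertex-disjoint odd cycles $C_1,\ldots,C_k$, together with a half-integral dual cover $\by^*$ partitioning the vertices into $V_0\cup V_{1/2}\cup V_1$. Define $\bI$ by cyclically orienting each $C_i$ (so $I_{\ore}=1$ along the chosen direction on cycle edges), setting the messages on each matching edge of $M$ according to its $(V_0,V_1)$-orientation (making local adjustments when the $V_1$-endpoint has no other $V_0$-neighbor), setting $I_{v\to u}=1$ from each uncovered vertex $v$ toward one neighbor $u$ (necessarily in $V_1$ by complementary slackness), and $I_{\ore}=0$ elsewhere. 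A direct case analysis—using the dual-feasibility restrictions that forbid $V_0$--$V_0$ and $V_0$--$V_{1/2}$ adjacencies—verifies both $\bI = \Pcal_G\circ\Pcal_G(\bI)$ and $\sum_v F_v(\bI)/2 = |M| + \tfrac{1}{2}\sum_i|C_i| = \sum_e x_e^* = \nu^*(G)$.

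The delicate part is the two-round chaining in the feasibility argument (it is easy to confuse which relation uses $\bI$ versus $\bJ$), and, in the second claim, the correct setting of $\bI$ on matching edges and on auxiliary non-support edges incident to $V_1$-vertices so that the fixed-point equations close globally. The dual structural restrictions imposed by $\by^*$ are what make the verification go through.
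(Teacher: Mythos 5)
Your feasibility argument is correct and is in substance the paper's own proof of Lemma \ref{lem:fraccov}: both reduce the edge inequality to a two-round chase through $\bJ=\Pcal_G(\bI)$, the paper in the form ``$\sum_{w\in\d u}I_{w\to u}=0$ forces $\sum_{w\in\d v}I_{v\to w}=0$'' and yours in the equivalent form ``$F_v=0$ forces $F_u=2$''. That half is fine.

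The second claim is where your proposal has a genuine gap. The inequality $\inf_{\bI}\sum_vF_v(\bI)/2\geq\tau^*(G)=\nu^*(G)$ is indeed immediate from feasibility plus duality. But the achievability direction rests entirely on an explicit fixed point built from a half-integral optimal pair $(\bx^*,\by^*)$, and that construction is left unspecified exactly where it is hard. As literally stated (orient each matching edge from its $V_0$-endpoint to its $V_1$-endpoint, set $I_{\ore}=0$ off the support), it is not a fixed point: on the path $a$--$b$--$c$--$d$ with $M=\{ab,cd\}$ and $V_1=\{b,c\}$, the unique solution of $\bI=\Pcal_G\circ\Pcal_G(\bI)$ has $I_{b\to a}=I_{c\to d}=1$, which your rule sets to $0$. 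Your parenthetical ``local adjustments'' is precisely the content that has to be defined and proved, and the promised ``direct case analysis'' (cycle vertices meeting non-support edges, $V_1$--$V_1$ matching edges, matched $V_{1/2}$ vertices, exposed vertices of degree $\geq 2$, and their interactions) is never carried out; deferring it to Proposition 3.5 of \cite{soda12} is a citation, not a proof. Note that the paper takes a completely different route here: the witness achieving the infimum is not constructed combinatorially but is the zero-temperature BP limit $\bI^Y=I(\bY)$ with $\bY=\lim_{z\to\infty}\uparrow\bY(z)$. Proposition \ref{prop:limD} shows via the monotonicity Lemma \ref{lem:mon}, applied both to the increasing iterates $\bX^{2t}(z)$ and to the iteration $\bW^{k+1}=\Qcal_G\circ\Rcal_G(\bW^k)$ started from an arbitrary fixed point $\bI$, that $\sum_vF_v(\bI^Y)=\sum_v\Dcal_v(\bY)\leq\sum_vF_v(\bI)$; the value is then identified as $2\nu^*(G)$ through $\sum_v\Dcal_v(\bY(z))=2\sum_ex_e(z)$ and the Bethe free entropy analysis (Propositions \ref{prop:conc} and \ref{prop:max}). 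To make your version stand on its own you would need to write down the adjusted $\bI$ completely and verify the fixed-point equations globally; otherwise the monotonicity/variational argument of the paper is the proof.
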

By \cite{smw11}, if the LP relaxation (\ref{def:nu*}) has a unique optimum and
this optimum is integral, then iterating
the map $\Pcal_G$ will allow us to find the unique solution to the fixed point
equation $\bI=\Pcal_G(\bI)$. Indeed in this case, \cite{smw11} shows
that the following rule allows us to find the maximum matching from the messages $\bI$: put edge $e$ in the matching if and only if $I_{\ore}+I_{-\ore}=2$. Note that we
can then derive a minimum vertex cover from a maximum matching in linear time (Theorem
16.6 in \cite{sch03}).

We now show that LABP allows us to find $\bI$ achieving the
minimum in (\ref{eq:nuI}) and a minimum fractional vertex cover without any
restriction on $G$. 
\begin{proposition}\label{prop:IY}
Let $\bI^Y$ be the $\{0,1\}$-valued messages defined by $I_{\ore}^Y
=1$ if and only if $\lim_zY_{\ore}(z)=\infty$. Then $(F_v(\bI^Y)/2,\:v\in
V)$ is a minimum half-integral vertex cover, i.e. $2\nu^*(G)=\sum_v
F_v(\bI^Y)$.
In particular, $(F_v(\bI^Y),\:v\in V)$ is a 2-approximate solution to
vertex cover on $G$.
\end{proposition}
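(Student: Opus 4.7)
The plan is to verify that $\bI^Y$ is a fixed point of $\Pcal_G\circ\Pcal_G$, so that Theorem \ref{th2} applies and $(F_v(\bI^Y)/2)_{v\in V}$ is automatically a (half-integral) fractional vertex cover, and then to show that this vector attains the infimum in (\ref{eq:nuI}) by checking complementary slackness against the optimal fractional matching $\bx^*$ produced by Theorem \ref{th1}. Throughout I would use that the limits $\lim_{z\to\infty}Y_{\ore}(z)\in[0,\infty]$ exist, which follows from the monotonicity of LABP in $z$ alluded to in the abstract.

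First I would check that $\bI^Y=\Pcal_G\circ\Pcal_G(\bI^Y)$ by a two-level asymptotic analysis of (\ref{eq:rec}). If $Y_{u\to v}(z)\to\infty$, the recursion forces $\sum_{w\in\d u\bac v}Y_{w\to u}(z)=o(z)$, so each $Y_{w\to u}(z)/z\to 0$; applying (\ref{eq:rec}) one level deeper, at each such $w$ the sum $\sum_{w'\in\d w\bac u}Y_{w'\to w}(z)$ must diverge, so some $Y_{w'\to w}(z)\to\infty$, giving $\Pcal_G(\bI^Y)_{w\to u}=0$ and hence $\Pcal_G^2(\bI^Y)_{u\to v}=1$. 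Conversely, if $Y_{u\to v}(z)$ stays bounded, then $\sum_{w\in\d u\bac v}Y_{w\to u}(z)\asymp z$, so by pigeonhole some $Y_{w\to u}(z)/z$ is bounded away from $0$; at that particular $w$ the messages $Y_{w'\to w}(z)$ with $w'\in\d w\bac u$ are all bounded, yielding $\Pcal_G(\bI^Y)_{w\to u}=1$ and thus $\Pcal_G^2(\bI^Y)_{u\to v}=0$.

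Next I would derive the key identity $x_{uv}(z)=Y_{u\to v}(z)/(1+S_v(z))$, where $S_v(z):=\sum_{w\in\d v}Y_{w\to v}(z)$, by substituting the fixed-point equation $Y_{v\to u}(z)=z/(1+S_v(z)-Y_{u\to v}(z))$ into (\ref{def:xe}). Summing over incident edges gives $\sum_{e\in\d v}x_e(z)=S_v(z)/(1+S_v(z))$, which tends to $1$ iff some $Y_{u\to v}(z)\to\infty$, i.e., iff $a_v:=\ind(\exists u\in\d v,\, I^Y_{u\to v}=1)=1$. A short contradiction shows that for $v$ with $\d v\ne\emptyset$, the value $b_v:=\ind(\forall u\in\d v,\, I^Y_{v\to u}=0)=1$ forces $a_v=1$, because boundedness of each $Y_{v\to u}(z)=z/(1+S_v(z)-Y_{u\to v}(z))$ requires $S_v(z)\to\infty$. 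Since $F_v(\bI^Y)=a_v+b_v$, this gives one half of complementary slackness: $y_v>0\Rightarrow\sum_{e\in\d v}x_e^*=1$.

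For the other half, $x_e^*>0\Rightarrow y_u+y_v=1$, I would use that $y_u+y_v\in\{1,3/2,2\}$ (by the VC property) and eliminate the cases $3/2$ and $2$. If $y_u=y_v=1$, then $b_u=b_v=1$ gives $I^Y_{u\to v}=I^Y_{v\to u}=0$, both $Y_{\ore}(z),Y_{-\ore}(z)$ are bounded, and (\ref{def:xe}) yields $x_e^*=0$. If $y_u=1,y_v=1/2$, then $b_u=1$ gives $Y_{u\to v}(z)$ bounded, while $a_v=1$ combined with $I^Y_{u\to v}=0$ produces some $v'\in\d v\bac u$ with $Y_{v'\to v}(z)\to\infty$, forcing $Y_{v\to u}(z)\le z/Y_{v'\to v}(z)$ and hence $Y_{u\to v}(z)Y_{v\to u}(z)/z\to 0$, again $x_e^*=0$. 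Complementary slackness then gives $\sum_v F_v(\bI^Y)/2=\sum_e x_e^*=\nu^*(G)$, proving minimality. The 2-approximation claim follows because $\{v: F_v(\bI^Y)\ge 1\}$ is a valid vertex cover (since $F_u+F_v\ge 2$ on every edge) of size at most $\sum_v F_v(\bI^Y)=2\nu^*(G)\le 2\tau(G)$. The main obstacle is the asymmetric case $y_u+y_v=3/2$: one must combine both pieces of information ($a_v=1$ and $I^Y_{u\to v}=0$) to extract a divergent incoming message at $v$ coming from a neighbor other than $u$; a related subtlety in the fixed-point step is that $Y_{u\to v}(z)\to\infty$ does not imply that incoming $Y_{w\to u}(z)$ are bounded (e.g., on odd cycles they all grow like $\sqrt{z}$), which is why one genuinely needs to iterate $\Pcal_G$ twice rather than once.
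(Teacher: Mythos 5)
Your proof is correct, but it takes a genuinely different route from the paper's. The paper obtains $\sum_v F_v(\bI^Y)=2\nu^*(G)$ by a single global identity: applying Lemma \ref{lem:mon} to the fixed point $\bY=\lim_z\uparrow\bY(z)$ of $\Qcal_G\circ\Rcal_G$ (where both inequalities of that lemma hold, the key step being the reindexed-sum identity $B_v=\left(1-\sum_{\ore\in\d v}I_{-\ore}(\bY')\right)^+$), which gives $\sum_v\Dcal_v(\bY)=\sum_v F_v(\bI^Y)$; this is then matched with $\sum_v\Dcal_v(\bY(z))=2\sum_e x_e(z)\to 2\nu^*(G)$ from Theorem \ref{th1}, while feasibility and half-integrality come from Lemma \ref{lem:fraccov}. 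You instead verify the two complementary-slackness conditions locally against the optimal $\bx^*$: your identity $x_{uv}(z)=Y_{u\to v}(z)/(1+S_v(z))$ is exactly the one the paper derives inside the proof of Proposition \ref{prop:max}, your fixed-point verification $\bI^Y=\Pcal_G\circ\Pcal_G(\bI^Y)$ reproves the paper's (\ref{eq:fp0}), and your case analysis on $y_u+y_v\in\{1,3/2,2\}$ is sound (the divergent message at $v$ in the $3/2$ case is indeed forced to come from $\d v\bac u$ because $I^Y_{u\to v}=0$). Both routes ultimately rest on Theorem \ref{th1}. What yours buys is strictly finer, local information --- exactly which primal and dual constraints are tight --- and independence from Lemma \ref{lem:mon} and Proposition \ref{prop:limD}; what it does not give is the variational characterization (\ref{eq:nuI}) of $\nu^*(G)$ as an infimum over all fixed points of $\Pcal_G\circ\Pcal_G$, which the paper's monotone sandwich establishes in the same stroke. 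Two small remarks: like the paper's statement itself, your argument implicitly assumes $G$ has no isolated vertices (an isolated $v$ has $F_v=1$ yet $\sum_{e\in\d v}x_e^*=0$); and your closing observation about odd cycles, where all messages grow like $\sqrt z$, correctly identifies why one must iterate $\Pcal_G$ twice rather than once.
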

Recall that if the unique games conjecture is true, then vertex cover
cannot be approximated within any constant factor better than 2 as
shown by \cite{khot2008vertex}.

\begin{example}
Consider the cycle with 3 nodes. Then $I^Y_{\ore}=1$ for all oriented edges
and $F_v(\bI^Y)=1$ for all $v\in V$. Note that $\bI^Y$ is not the only
fixed point to $\bI=\Pcal_G\circ\Pcal_G(\bI)$, the all zeros vector is also
a solution. However the map $\Pcal_G$ has no fixed point and max-product BP
as defined in \cite{smw11} does not converge.
\end{example}

\subsection{Bipartite graphs}\label{sec:bi}
We now specialize our results to bipartite graphs.
If the graph is bipartite then the fractional matching polytope is
indeed the matching polytope, i.e. the convex hull of the incidence
vectors of matchings (Corollary 18.1(b) in \cite{sch03}) so that
$\nu^*(G)=\nu(G)$. By K\"onig's matching theorem (Theorem 16.2 in
\cite{sch03}), we also have in this case $\nu(G)=\tau(G)$. To summarize, a
direct application of Theorem \ref{th1} gives:
\begin{corollary}
If $G$ is bipartite, LABP computes the matching number
which is equal to the vertex cover number.
\end{corollary}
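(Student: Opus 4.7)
The plan is to combine Theorem~\ref{th1} with two classical facts about bipartite graphs, since the corollary is essentially a direct specialization of Theorem~\ref{th1}. First I would invoke Theorem~\ref{th1} to conclude that for any $z > e^{|E|}$, LABP produces the vector $\bx^* \in \R^E$ (via the formula $x_e(z) = Y_{\ore}(z) Y_{-\ore}(z)/(z + Y_{\ore}(z) Y_{-\ore}(z))$) and that $\sum_{e\in E} x_e^* = \nu^*(G) = \tau^*(G)$. Thus LABP asymptotically computes the common optimum of the LP relaxation for fractional matching and fractional vertex cover.

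Second, I would use the integrality of the fractional matching polytope on bipartite graphs: by Corollary~18.1(b) in \cite{sch03}, when $G$ is bipartite, $FM(G)$ coincides with the convex hull of incidence vectors of matchings of $G$. Consequently the maximum of a linear functional over $FM(G)$ is attained at an integral vertex, giving $\nu^*(G) = \nu(G)$.

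Third, I would invoke K\"onig's matching theorem (Theorem~16.2 in \cite{sch03}) which states that in a bipartite graph $\nu(G) = \tau(G)$. Chaining these identities gives
\begin{eqnarray*}
\sum_{e\in E} x_e^* \;=\; \nu^*(G) \;=\; \nu(G) \;=\; \tau(G) \;=\; \tau^*(G),
\end{eqnarray*}
so the quantity produced by LABP is simultaneously the matching number and the vertex cover number of $G$, which is the claim.

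There is no real technical obstacle here, since all the heavy lifting (convergence of LABP and optimality for the LP relaxation on arbitrary graphs) has already been carried out in Theorem~\ref{th1}; the corollary is a structural consequence of bipartiteness. The only point worth flagging in a careful write-up is that LABP, as stated, produces a fractional optimum, not an integral matching or vertex cover; in the bipartite case integrality of the LP optima guarantees that the numerical value matches $\nu(G) = \tau(G)$, but recovering an actual integral matching or vertex cover from $\bx^*$ would require an additional rounding step (e.g. via the standard LP-integrality argument on $FM(G)$), which the statement of the corollary does not demand.
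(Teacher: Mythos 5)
Your proposal is correct and follows essentially the same route as the paper: the author likewise derives the corollary as a direct application of Theorem \ref{th1} combined with the integrality of the fractional matching polytope for bipartite graphs (Corollary 18.1(b) in \cite{sch03}) and K\"onig's matching theorem ($\nu(G)=\tau(G)$). Your closing remark that LABP returns the optimal \emph{value} rather than an integral matching or cover is a fair observation, and is precisely what the paper addresses separately in Proposition \ref{prop:bi} via the construction of $V(\bI^Y)$.
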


We now show that for any bipartite graph $G=(V=U\cup W,E)$, LABP allows us to define a minimum vertex cover.
For any $\bI\in\{0,1\}^{\orE}$, we consider the following subset $V(\bI)$
of vertices defined differently for vertices in $U$ and $W$ as follows:
\BEA
\label{def:VI1}\mbox{for } u\in U,\: u\in V(\bI) &\Leftrightarrow& \sum_{v\in \d u} I_{v\to
u} \geq 1,\\
\label{def:VI2}\mbox{for } w\in W,\: w\in V(\bI) &\Leftrightarrow&
\sum_{v\in \d w} \Pcal_{v\to w}(\bI) \geq 2.
\EEA

\begin{proposition}\label{prop:bi}
For any bipartite graph, the subset of vertices $V(\bI^Y)$ is a
minimum vertex cover, where $\bI^Y$ was defined in Proposition
\ref{prop:IY}.
\end{proposition}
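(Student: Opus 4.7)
The plan is to stratify $V$ according to the value $F_v(\bI^Y)/2 \in \{0,1/2,1\}$ into subsets $V_0 \sqcup V_{1/2} \sqcup V_1$, to identify $V(\bI^Y)$ explicitly as $V_1 \sqcup (V_{1/2} \cap U)$, and then to close via K\"onig's theorem applied to the bipartite subgraph induced on $V_{1/2}$. Proposition~\ref{prop:IY} already yields $\sum_v F_v(\bI^Y)/2 = \nu^*(G) = \tau(G)$ (the last equality by bipartiteness), so proving $|V(\bI^Y)| = \tau(G)$ reduces to the identity $|V_{1/2}\cap U| = |V_{1/2}|/2$.

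The central lemma I would prove first is an \emph{asymptotic dichotomy at each vertex}: one cannot have simultaneously $I^Y_{u \to v}=0$ for all $u \in \d v$ and $I^Y_{v \to u}=0$ for all $u \in \d v$. This is read off the LABP recursion, since $I^Y_{v \to u}=0$ for every $u$ means each $Y_{v \to u}(z)$ stays bounded as $z \to \infty$, which forces the denominator $1+\sum_{w \in \d v\bac u}Y_{w \to v}(z)$ to grow at rate $z$, so some $Y_{w \to v}$ must diverge and hence $I^Y_{w \to v}=1$. I expect this to be the main technical step, as it drives the conversion of the half-integral vertex cover into an integral one and requires controlling $Y_{\ore}(z)$ in the limit $z \to \infty$ rather than only at fixed $z$, which should come from the monotonicity properties of LABP established earlier in the paper.

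Armed with this lemma and the relation $\bI^Y=\Pcal_G\circ\Pcal_G(\bI^Y)$, introducing the auxiliary messages $\bJ^Y := \Pcal_G(\bI^Y)$, I would carry out a case analysis on the six cells $V_i \cap X$ with $i \in \{0,1/2,1\}$ and $X \in \{U,W\}$. For $U$-vertices the answer is read off the first indicator in the definition of $F_u$: $V_1 \cap U \subseteq V(\bI^Y)$, $V_0 \cap U$ is disjoint from $V(\bI^Y)$, and the lemma eliminates the ``all-zero'' branch of $F_u=1$ so that $V_{1/2}\cap U \subseteq V(\bI^Y)$. For $w \in V_1\cap W$, the condition $I^Y_{w \to u}=0$ for every $u \in \d w$, read through $\bI^Y=\Pcal_G^2(\bI^Y)$, gives for each $u \in \d w$ some $\ell \in \d w\bac u$ with $J^Y_{\ell \to w}=1$; a short pigeonhole argument (the exceptional $\ell$ cannot be the same across all $u$) forces $|\{\ell : J^Y_{\ell \to w}=1\}|\ge 2$, so $w \in V(\bI^Y)$. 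A dual counting argument disposes of $V_0\cap W$ and $V_{1/2}\cap W$, yielding $V(\bI^Y)=V_1 \sqcup (V_{1/2}\cap U)$.

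Finally, $V(\bI^Y)$ is a vertex cover because $F_u+F_w \geq 2$ on every edge rules out $V_0$--$V_0$ and $V_0$--$V_{1/2}$ edges, so every edge either has an endpoint in $V_1$ or lies entirely inside $V_{1/2}$, in which case its $U$-endpoint belongs to $V(\bI^Y)$. Its size equals $\tau(G)$ because $\tfrac12\,\ind_{V_{1/2}}$ is a fractional vertex cover of the bipartite subgraph $G[V_{1/2}]$ and any integer vertex cover of $G[V_{1/2}]$ combined with $V_1$ is a vertex cover of $G$; hence K\"onig's theorem gives $\tau(G[V_{1/2}])=\nu(G[V_{1/2}])=|V_{1/2}|/2$ and a perfect matching of $G[V_{1/2}]$, forcing $|V_{1/2}\cap U|=|V_{1/2}\cap W|=|V_{1/2}|/2$ and therefore $|V(\bI^Y)|=|V_1|+|V_{1/2}|/2=\tau(G)$.
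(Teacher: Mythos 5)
Your proof is correct, but it takes a genuinely different route from the paper's. The paper deduces Proposition~\ref{prop:bi} from two purely combinatorial lemmas about fixed points of $\Pcal_G\circ\Pcal_G$: Lemma~\ref{lem:vc} shows that $V(\bI)$ is a vertex cover for \emph{any} such fixed point by a direct case check on each edge, and Lemma~\ref{lem:VI} shows that $|V(\bI)|=\tfrac12\sum_v F_v(\bI)$ for any \emph{minimizing} fixed point via an exchange argument: writing $\sum_v F_v(\bI)=A+B$, where $A=|V(\bI)|$ depends only on the $W\to U$ messages and $B$ only on the $U\to W$ messages, and splicing the two halves of $\bI$ and $\Pcal_G(\bI)$ into a new fixed point $\bK$ with $\sum_v F_v(\bK)<\sum_v F_v(\bI)$ whenever $A\neq B$. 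You instead exploit an asymptotic property of $\bI^Y$ itself --- the dichotomy that no vertex can have all incoming and all outgoing messages equal to $0$, which indeed follows from $Y_{v\to u}(z)=z\,(1+\sum_{w\in\d v\bac u}Y_{w\to v}(z))^{-1}$ together with the monotonicity in $z$, and which fails for general fixed points (e.g.\ the all-zeros fixed point on $C_4$) --- to identify $V(\bI^Y)$ explicitly as $V_1\sqcup(V_{1/2}\cap U)$, and you then close with K\"onig's theorem applied to $G[V_{1/2}]$. I checked your case analysis (including the pigeonhole step for $V_1\cap W$ and the ``at most one positive $J$-message'' step for $V_0\cap W$ and $V_{1/2}\cap W$) and the final counting; they are sound. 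The paper's route buys generality (any minimizer of $\sum_v F_v$ over fixed points works, with no limit information beyond Proposition~\ref{prop:IY}); your route buys structure (an explicit description of the cover in terms of the stratification by $F_v$, and the by-product that $G[V_{1/2}]$ has a perfect matching, hence $|V_{1/2}\cap U|=|V_{1/2}\cap W|$). Both arguments rest on Proposition~\ref{prop:IY} for the value $\tfrac12\sum_v F_v(\bI^Y)=\nu^*(G)$ and on bipartiteness for $\nu^*(G)=\tau(G)$. One small point to make explicit when writing this up: the dichotomy lemma and the inclusion $V_{1/2}\cap U\subseteq V(\bI^Y)$ require $\d v\neq\emptyset$, so isolated vertices must be excluded (as the paper implicitly does).
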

\begin{example}
Consider the cycle with $4$ nodes. Again, we have $Y(z) = \sqrt{1/4+z}-1/2$
and the all-one vector is a fixed point of $\Pcal_G\circ \Pcal_G$. We
see that if we apply the results of the previous section, we have $F_v(\bI^Y)
=1$ for all $v\in V$ and we obtain a minimum fractional vertex cover. 
The above procedure (\ref{def:VI1}) and (\ref{def:VI2}) gives instead a
minimum vertex cover. 
Note also that $\Pcal_G$ has no fixed point so that the max-product BP of
\cite{smw11} does not converge.
\end{example}

\subsection{Results at positive temperature}\label{sec:pos}

In this section, we consider general graphs and LABP for
finite $z$.
We introduce the family of probability distributions on the set of
matchings parametrised by a parameter $z>0$:
\BEA
\label{eq:gibbs}\mu^z_G(\bB) = \frac{z^{\sum_e B_e}}{P_G(z)},
\EEA
where $P_G(z) =\sum_{\bB}z^{\sum_e B_e}\prod_{v\in
  V}\ind\left(\sum_{e\in \d v}B_e\leq 1\right)$.
For any finite graph, when $z$ tends to infinity, the distribution
$\mu_G^z$ converges to the uniform distribution over maximum matchings
so that we have
\BEA
\label{eq:MG}
\nu(G) =\lim_{z\to \infty} \sum_{e\in E} \mu^z_G(B_e=1).
\EEA
In statistical physics, this model is known as the monomer-dimer model
and its analysis goes back to the work of Heilmann and Lieb
\cite{heilmannlieb}, see also \cite{ss13} for a recent contribution in
theoretical computer science. The recursion (\ref{eq:rec}) has a natural
interpretation in term of the probability distribution
(\ref{eq:gibbs}) when the graph $G$ is a tree (see point (iii) in
Proposition \ref{prop:ft}).
The fact that this recursion is still useful for arbitrary graphs and
moreover allows us to study not only matching but vertex cover is
highly surprising.

In the rest of this section, we introduce the Bethe approximation
which is a standard approach to approximate the probability distribution
(\ref{eq:gibbs}). This approach will give results only for the
matching problem. The proofs of our results for the vertex cover
problem do not rely on this approximation and requires original
techniques based on subtle monotonicity arguments for the recursion
(\ref{eq:rec}) wich are presented in the Appendix \ref{sec:proofs}.

We define the internal energy $U(z)$ and the canonical entropy $S(z)$ as:
\begin{eqnarray*}
U_G(z) &=& -\sum_{e\in E} \mu_G^z(B_e=1),\\
S_G(z) &=& -\sum_{\bB} \mu^z_G(\bB)\ln \mu^z_G(\bB).
\end{eqnarray*}
The free entropy $\Phi_G(z)$ is then defined by
\begin{eqnarray*}
\Phi_G(z) = -U_G(z)\ln z  + S_G(z).
\end{eqnarray*}
A more conventional notation in the statistical physics literature
corresponds to an inverse temperature $\beta = \ln z$. A simple
computation shows that:
\begin{eqnarray*}
\Phi_G(z) = \ln P_{G}(z).
\end{eqnarray*}

Let $D(G)$ be the set of distribution over matchings, i.e. $\mu \in
D(G)$ if and only if $\mu(\bB \mbox{ is a matching in } G) =1$.
Let $\mu_G\in D(G)$.
For any $e\in E$, we define $\mu_{[G,e]}$ the marginal of $\mu_G$
restricted to $e$, i.e.
\begin{eqnarray*}
\mu_{[G,e]}(1)= 1-\mu_{[G,e]}(0)=\mu_G(B_e=1)= \sum_{\bB, B_e=1}
\mu_G(\bB).
\end{eqnarray*}
Similarly for any $v\in V$, we define $\mu_{[G,\d v]}$ the marginal of $\mu_G$
restricted to $\d v \subset E$.
For any $\mu_G\in D(G)$, the Bethe internal energy $U^B[\mu_G]$ and the Bethe entropy $S^B[\mu_G]$ are then defined by
\begin{eqnarray*}
U^B[\mu_G] &=& -\sum_{e\in E}\mu_{[G,e]}(1) \\
S^B[\mu_G] &=& -\sum_{v\in V}\sum_{\bb_{\d v}\in \{0,1\}^{|\d v|}}\mu_{[G,\d v]}(\bb_{\d
  v})\ln \left( \mu_{[G,\d v]}(\bb_{\d v})\right) \\
&&+ \sum_{e\in E}
\sum_{b_e\in \{0,1\}} \mu_{[G, e]}(b_e)\ln \left( \mu_{[G, e]}(b_e)\right)
\end{eqnarray*}
The Bethe free entropy $\Phi^B[\mu_G;z]$ is then defined by
\begin{eqnarray*}
\Phi^B[\mu_G;z] = -U^B[\mu_G]\ln z  + S^B[\mu_G]
\end{eqnarray*}

It is well known that if $G$ is a tree, i.e. acyclic graph, then we have
$\Phi^B[\mu_G^z;z] = \Phi_G(z)$ (see \cite{wj08}).

We first reformulate the Bethe free entropy function.
\begin{proposition}\label{prop:param}
Let $\mu_G\in D(G)$ be a distribution over matchings.
Define $\bx\in \R^E$ by
$x_e =\mu_{[G,e]}(1)$. Then we have $\bx\in FM(G)$ defined by (\ref{def:LG}) and 
\begin{eqnarray*}
U^B[\mu_G] &=& -\sum_{e\in E} x_e\\
S^B[\mu_G] &=& \frac{1}{2}\sum_{v\in V}\left\{\sum_{e\in \d v} -x_e\ln
x_e +(1-x_e)\ln (1-x_e) \right.\\
&&\left.- 2\left( 1-\sum_{e\in \d v} x_e\right)\ln
\left( 1-\sum_{e\in \d v}x_e\right)\right\},
\end{eqnarray*}
with the standard convention $0\ln 0 =0$.
\end{proposition}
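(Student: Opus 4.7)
The plan is to translate the general definitions of $U^B$ and $S^B$ directly, using the fact that $\mu_G$ is supported on matchings so that at every vertex $v$ the vector $\bb_{\partial v}$ has at most one coordinate equal to $1$. This reduces the marginals $\mu_{[G,\partial v]}$ and $\mu_{[G,e]}$ to functions of the single-edge marginals $x_e=\mu_{[G,e]}(1)$, after which the claimed formulas follow by straightforward simplification and a handshake-lemma type accounting.

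First I would verify that $\bx\in FM(G)$. Nonnegativity is immediate, and for the degree constraint one writes
\[
\sum_{e\in\partial v}x_e \;=\; \sum_{e\in\partial v}\mu_G(B_e=1)\;=\;\EE_{\mu_G}\!\Bigl[\sum_{e\in\partial v}B_e\Bigr]\;\leq\;1,
\]
since $\mu_G$ charges only matchings. The internal-energy formula then follows tautologically from $U^B[\mu_G]=-\sum_e\mu_{[G,e]}(1)$.

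The core of the proof is the entropy computation. Because $\mu_G$ is supported on matchings, the admissible values of $\bb_{\partial v}$ are exactly the zero vector and the indicators of single edges $e\in\partial v$. Consequently
\[
\mu_{[G,\partial v]}(\mathbf{0})=1-\sum_{e\in\partial v}x_e,\qquad \mu_{[G,\partial v]}(\mathbf{1}_{e})=x_{e}\ \text{ for }e\in\partial v,
\]
so the vertex contribution to $S^B[\mu_G]$ is
\[
-\Bigl(1-\sum_{e\in\partial v}x_e\Bigr)\ln\Bigl(1-\sum_{e\in\partial v}x_e\Bigr)\;-\;\sum_{e\in\partial v}x_e\ln x_e.
\]
Similarly $\mu_{[G,e]}(1)=x_e$, $\mu_{[G,e]}(0)=1-x_e$, so each edge contributes $x_e\ln x_e+(1-x_e)\ln(1-x_e)$ to the second sum in $S^B$.

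Summing and using that each edge $e=(uv)$ appears in precisely two vertex-neighbourhoods $\partial u$ and $\partial v$, one gets
\[
\sum_{v\in V}\sum_{e\in\partial v}x_e\ln x_e=2\sum_{e\in E}x_e\ln x_e,
\qquad
\sum_{e\in E}(1-x_e)\ln(1-x_e)=\tfrac12\sum_{v\in V}\sum_{e\in\partial v}(1-x_e)\ln(1-x_e),
\]
and after collecting terms the right-hand side matches the claimed expression, with the factor $\tfrac12$ and the coefficient $-2$ on the slack term arising naturally from the double-counting. The only mildly delicate point is this accounting; once one commits to rewriting the edge sums via the incidence relation and keeps track of the sign on $-x_e\ln x_e$ versus $+(1-x_e)\ln(1-x_e)$, no analytic subtleties remain, and the conventions $0\ln 0=0$ cover the boundary values $x_e\in\{0,1\}$ or $\sum_{e\in\partial v}x_e=1$.
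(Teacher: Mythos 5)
Your proposal is correct and follows essentially the same route as the paper: establish $\bx\in FM(G)$ by linearity of expectation over the matching constraint, observe that the support of $\mu_{[G,\d v]}$ consists of the zero vector and single-edge indicators so that $\mu_{[G,\d v]}$ is the product-form distribution determined by the $x_e$, and then substitute into the definition of $S^B$ with the edge double-counting $\sum_v\sum_{e\in\d v}=2\sum_e$ producing the factor $\tfrac12$ and the coefficient $-2$. The paper compresses the final bookkeeping into ``the formula for $S^B[\mu_G]$ follows,'' which you have simply written out explicitly.
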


We then have
\begin{proposition}\label{prop:conc}
The function $S^B(\bx)$ defined by
\begin{eqnarray*}
S^B(\bx)&=&\frac{1}{2}\sum_{v\in V}\left\{\sum_{e\in \d v} -x_e\ln
x_e +(1-x_e)\ln (1-x_e) \right.\\
&&\left.- 2\left( 1-\sum_{e\in \d v} x_e\right)\ln
\left( 1-\sum_{e\in \d v}x_e\right)\right\}
\end{eqnarray*}
is non-negative and concave on $FM(G)$ defined by (\ref{def:LG}).
\end{proposition}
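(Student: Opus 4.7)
The plan is to work with the simpler form of $S^B$ that results from Proposition~\ref{prop:param} via the identity $\sum_{v\in V}\sum_{e\in\d v}=2\sum_{e\in E}$:
\[
S^B(\bx) = \sum_{e\in E}\bigl[-x_e\ln x_e + (1-x_e)\ln(1-x_e)\bigr] - \sum_{v\in V} s_v\ln s_v,
\]
where $s_v := 1-\sum_{e\in\d v}x_e \ge 0$ on $FM(G)$. I would establish concavity first and then deduce non-negativity from it.

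For concavity, I would compute the Hessian $H$ of $S^B$ directly. Setting $g(x) := -x\ln x + (1-x)\ln(1-x)$ (so that $-g''(x) = 1/x - 1/(1-x)$), the diagonal of $H$ at $e=(uv)$ equals $g''(x_e) - 1/s_u - 1/s_v$, and the off-diagonal at $e,e'$ sharing vertex $w$ equals $-1/s_w$ (other entries zero). Collecting the cross-terms through the expansion of $\sum_v(\sum_{e\in\d v}\xi_e)^2/s_v$ yields
\[
-\xi^\top H\xi = \sum_{e\in E}\Bigl(\frac{1}{x_e}-\frac{1}{1-x_e}\Bigr)\xi_e^2 + \sum_{v\in V}\frac{\bigl(\sum_{e\in\d v}\xi_e\bigr)^2}{s_v},
\]
so concavity reduces to
\[
\sum_{e\in E}\frac{\xi_e^2}{1-x_e}\le \sum_{e\in E}\frac{\xi_e^2}{x_e} + \sum_{v\in V}\frac{\bigl(\sum_{e\in\d v}\xi_e\bigr)^2}{s_v}.
\]
To prove this, I would exploit the identity $1-x_e = s_v+\sum_{e'\in\d v\setminus e}x_{e'}$ (valid at either endpoint $v$ of $e$) together with Cauchy--Schwarz (Titu's lemma): for any split $\xi_e=\alpha_{e,v}+\sum_{e'}\beta_{e,v,e'}$,
\[
\frac{\xi_e^2}{1-x_e}\le\frac{\alpha_{e,v}^2}{s_v}+\sum_{e'\in\d v\setminus e}\frac{\beta_{e,v,e'}^2}{x_{e'}}.
\]
Averaging the two endpoint bounds and choosing the splits so that, after summing over $e\in E$, the $1/s_v$ coefficients reassemble to $\bigl(\sum_{e\in\d v}\xi_e\bigr)^2$ and the $1/x_{e'}$ coefficients to $\xi_{e'}^2$ would close the argument.

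Non-negativity then follows from concavity and the behaviour of $S^B$ on extreme points. By the classical half-integrality of the extreme points of the fractional matching polytope, every vertex $\bx^*$ of $FM(G)$ has entries in $\{0,\tfrac12,1\}$, with the support decomposing into vertex-disjoint matched edges ($x_e^*=1$) and odd cycles ($x_e^*=\tfrac12$); in particular $s_v^*\in\{0,1\}$ at every extreme point. Since $g(0)=g(\tfrac12)=g(1)=0$ (the middle value from $-\tfrac12\ln\tfrac12+\tfrac12\ln\tfrac12=0$) and $s\ln s=0$ for $s\in\{0,1\}$, every summand of $S^B(\bx^*)$ vanishes, hence $S^B\equiv 0$ on the extreme points. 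Since $FM(G)$ is the convex hull of its extreme points, concavity then yields $S^B(\bx)\ge \sum_i\lambda_i S^B(\bx_i^*)=0$ for any $\bx=\sum_i\lambda_i\bx_i^*\in FM(G)$.

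The main obstacle is selecting the Cauchy--Schwarz splits: the proportional choice $\alpha_{e,v}\propto s_v,\ \beta_{e,v,e'}\propto x_{e'}$ makes Titu tight and yields no useful bound, while concentrating $\xi_e$ on $s_v$ overshoots on directions where $\sum_{e\in\d v}\xi_e$ is small but the individual $\xi_e^2$'s are not. An alternative, noted in the excerpt, is to invoke the concavity result of \cite{von13} for the Bethe free entropy of perfect matchings on complete bipartite graphs and to transfer it to the present setting via the standard reduction of matching on $G$ to bipartite perfect matching (adding a dummy partner at each vertex and passing to the bipartite double cover).
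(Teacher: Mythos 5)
Your reduction of concavity to the quadratic-form inequality
\[
\sum_{e\in E}\frac{\xi_e^2}{1-x_e}\;\le\;\sum_{e\in E}\frac{\xi_e^2}{x_e}+\sum_{v\in V}\frac{\bigl(\sum_{e\in\d v}\xi_e\bigr)^2}{1-\sum_{e\in\d v}x_e}
\]
is correct --- the Hessian computation checks out and the inequality is true --- but your proof of it is not complete. The entire content of the concavity claim now sits in the choice of the Cauchy--Schwarz splits, and you explicitly concede that the two natural choices fail. This is a genuine gap, not a routine verification: the underlying per-vertex inequality $\sum_i\xi_i^2/(1-x_i)\le\sum_i\xi_i^2/x_i+2(\sum_i\xi_i)^2/(1-\sum_i x_i)$ on the simplex is asymptotically tight in some directions (e.g.\ $k=2$, $x\to(\tfrac12,\tfrac12)$, $\xi=(1,-1)$), so any Titu-type split must be exactly sharp there, and splits of the form $\beta_{e,v,e'}=c\,\xi_{e'}$ with constant $c$ provably cannot work for degree $\ge 3$. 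Your fallback --- transferring Vontobel's result through a reduction to bipartite perfect matching via dummy vertices and the bipartite double cover --- is also only sketched and is more roundabout than needed. The paper's route is much shorter: it decomposes $S^B(\bx)=\tfrac12\sum_{v}g(\bx_{\d v})$ with $g$ a function on the simplex $\Delta^{|\d v|}$ and uses the pointwise identity $g(\bx)=h(\bx)+H\bigl(\sum_i x_i\bigr)$, where $h$ is exactly the function shown to be non-negative and concave in Theorem 20 of \cite{von13} and $H$ is the (concave) binary entropy; concavity of $S^B$ then follows immediately, with no Hessian computation and no graph surgery. If you want to salvage your direct approach, adopting this identity is the cleanest repair.

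Your non-negativity argument is correct but genuinely different from the paper's. You invoke the half-integrality of the extreme points of $FM(G)$ (support decomposing into a matching plus vertex-disjoint odd cycles, so every $x_e^*\in\{0,\tfrac12,1\}$ and every $s_v^*\in\{0,1\}$), check that every summand of $S^B$ vanishes there, and conclude by concavity. This is clean and self-contained modulo the classical polytope fact, but note that it makes non-negativity logically dependent on concavity; the paper obtains $S^B\ge 0$ pointwise and for free from $g=h+H$ with $h\ge 0$ and $H\ge 0$. Once the concavity step is repaired, your extreme-point argument is a perfectly good alternative for the non-negativity half.
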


We also define $U^B(\bx)=-\sum_{e\in E} x_e$ and  $\Phi^B(\bx;z)=
-U^B(\bx)\ln z +S^B(\bx)$.
Note that for any $\mu_G\in D(G)$, we have,
\BEAS
\Phi^B(\mu_G;z) = \Phi^B(\bx,z),
\EEAS
for $\bx$ defined by $x_e =\mu_{[G,e]}(1)$.
\begin{proposition}\label{prop:max}
Recall that $\bx(z)\in \R^E$ is defined by (\ref{def:xe}).
Then we have:
\begin{eqnarray*}
\sup_{\bx\in FM(G)}\Phi^B(\bx;z)&=&\Phi^B(\bx(z);z).
\end{eqnarray*}
\end{proposition}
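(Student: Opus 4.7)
The plan is to recognize Proposition~\ref{prop:max} as a consequence of three facts: (i) $\Phi^B(\cdot;z)$ is concave on $FM(G)$, (ii) the stationarity equations of $\Phi^B$ on the interior admit a closed form in terms of the ``cavity'' weights $y_v = 1-\sum_{e\in\d v}x_e$, and (iii) the point $\bx(z)$ produced by the LABP fixed point satisfies exactly these stationarity equations. Once (i)-(iii) are in place, a concave function on a convex set whose gradient vanishes at an interior point attains its global maximum there, which is the claim.

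I would first note that $\Phi^B(\bx;z)=\ln z\sum_e x_e + S^B(\bx)$ is concave on $FM(G)$, since the linear term is obviously concave and $S^B$ is concave by Proposition~\ref{prop:conc}. Next I would compute the partial derivatives. Using $\frac{\partial y_v}{\partial x_e}=-\ind(e\in\d v)$ and rewriting
\begin{eqnarray*}
S^B(\bx)=\sum_{e\in E}\bigl[-x_e\ln x_e+(1-x_e)\ln(1-x_e)\bigr]-\sum_{v\in V}y_v\ln y_v,
\end{eqnarray*}
a direct calculation gives, for every $e=(uv)\in E$,
\begin{eqnarray*}
\frac{\partial \Phi^B}{\partial x_e}(\bx;z)=\ln\frac{zy_uy_v}{x_e(1-x_e)}.
\end{eqnarray*}
Thus the interior critical-point condition is
\begin{eqnarray*}
x_e(1-x_e)=z\,y_u y_v,\qquad e=(uv)\in E.
\end{eqnarray*}

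The main verification is to check that the LABP output $\bx(z)$ satisfies this system. I would introduce $Z_u:=1+\sum_{w\in\d u}Y_{w\to u}(z)$. The fixed-point equation $Y_{u\to v}(z)=z/\bigl(1+\sum_{w\in\d u\bac v}Y_{w\to u}(z)\bigr)$ rewrites as
\begin{eqnarray*}
Z_u\,Y_{u\to v}(z)=z+Y_{u\to v}(z)\,Y_{v\to u}(z),
\end{eqnarray*}
and symmetrically for $Z_v\,Y_{v\to u}(z)$. From this I can first deduce, edge by edge, that $x_e(z)=Y_{v\to u}(z)/Z_u$ for $e=(uv)$, which telescopes to the crucial identity $y_u(z)=1/Z_u$ (so in particular $y_u(z)>0$, placing $\bx(z)$ in the relative interior of $FM(G)$). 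Multiplying the two fixed-point identities then yields
\begin{eqnarray*}
\bigl(z+Y_{u\to v}(z)Y_{v\to u}(z)\bigr)^2 = Z_uZ_v\,Y_{u\to v}(z)Y_{v\to u}(z),
\end{eqnarray*}
which is exactly the required $z y_uy_v = x_e(z)(1-x_e(z))$ after substituting the definitions of $x_e(z)$ and $1-x_e(z)=z/(z+Y_\ore Y_{-\ore})$.

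The hard part is the algebraic identity linking the BP fixed point to the Bethe stationarity condition; everything else is bookkeeping. Once this identity is verified, concavity of $\Phi^B(\cdot;z)$ on $FM(G)$, together with the vanishing of its gradient at the interior point $\bx(z)$, implies that $\bx(z)$ is a global maximizer, giving $\sup_{\bx\in FM(G)}\Phi^B(\bx;z)=\Phi^B(\bx(z);z)$ as claimed.
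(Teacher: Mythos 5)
Your proposal is correct and follows essentially the same route as the paper: compute the gradient of $\Phi^B(\cdot;z)$, obtain the stationarity condition $x_e(1-x_e)=z\,(1-\sum_{f\in \d u}x_f)(1-\sum_{f\in \d v}x_f)$, verify it at the LABP fixed point via the identity $1-\sum_{f\in \d u}x_f(z)=(1+\sum_{\ore\in\d u}Y_{\ore}(z))^{-1}$, and conclude by concavity (Proposition~\ref{prop:conc}). Your write-up is in fact slightly cleaner than the paper's on two points: you make the appeal to concavity and interiority explicit where the paper leaves it implicit, and your sign of the $\ln z$ term in the gradient is the correct one.
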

In words, our BP algorithm is shown to maximize the Bethe free entropy
which in our case is concave. Note that in general, a BP fixed point
corresponds to a staionnary point of the Bethe free entropy, see \cite{yfw05}.

We now give a reparametrization of the Gibbs distribution.
For any vector $\bB\in\{0,1\}^{\orE}$, we denote by $\bB_{\d
  v}\in\{0,1\}^{\d v}$ its restriction to components in $\d v$. 
We first define the marginal probabilities
\BEAS
\mu_{\d v}(\bB_{\d v}) = \left( 1-\sum_{e\in \d v}x_e(z)\right)^{1-\sum_{e\in \d v} B_e}\prod_{e\in \d v} x_e(z)^{B_e},
\EEAS
and
\BEAS
\mu_e(B_e) = x_e(z)^{B_e}(1-x_e(z))^{1-B_e},
\EEAS
where $x_e(z)$ is defined by (\ref{def:xe}).
Given a graph $G=(V,E)$ and some set $F\subset E$, we define $d_F(v)$
as the degree of node $v$ in the subgraph induced by $F$. A
generalized loop is any subset $F$ such that $d_F(v) \neq 1$ for all
$v\in V$. We define $V(F)$ as the number of vertices covered by $F$,
i.e. vertices with $d_F(v)\geq 1$.
\begin{theorem}\label{th3}
For any graph $G$, we have for $z>0$,
\BEA
\label{eq:muG}\mu^z_G(\bB)=\frac{1}{Z} \frac{\prod_{v\in V} \mu_{\d v}(\bB_{\d
    v})}{\prod_{e\in E} \mu_e(B_e)},
\EEA
with
\BEA
\label{eq:Z}Z = 1+\sum_{\emptyset\neq F\subset E}(-1)^{V(F)}\prod_{v\in
  V}\left(d_F(v)-1\right)\prod_{e\in F}\frac{x_e(z)}{1-x_e(z)},
\EEA
where only generalized loops $F$ lead to non-zero terms in the sum of
(\ref{eq:Z}). Moreover, we have
\BEAS
\ln Z = \Phi_G(z)-\Phi^B(\bx(z);z).
\EEAS
\end{theorem}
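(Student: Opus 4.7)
The plan has three threads: (i)~establish the pointwise reparametrization~(\ref{eq:muG}); (ii)~identify the normalizing constant with $e^{\Phi^B(\bx(z);z)}$; (iii)~expand $Z$ as a generalized-loop series.

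For~(i), set $T_v := 1 + \sum_{u\in\partial v} Y_{u\to v}(z)$. The BP fixed-point equation rearranges to $Y_{v\to w}(T_v - Y_{w\to v}) = z$, from which one extracts three clean identities: $z + Y_{v\to w}Y_{w\to v} = Y_{v\to w}T_v = Y_{w\to v}T_w$, $1 - \sum_{e\in\partial v}x_e(z) = 1/T_v$, and $x_e(z) = Y_{w\to v}/T_v$ for $e = vw$. Plugged into the definitions, these give, for any matching $\bB$, $\mu_{\partial v}(\bB_{\partial v}) = T_v^{-1}\prod_{e=vw\in\partial v}Y_{w\to v}^{B_e}$ and $\mu_e(B_e) = z^{1-B_e}(Y_{u\to v}Y_{v\to u})^{B_e}/(z+Y_{u\to v}Y_{v\to u})$. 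Each directed edge's $Y$-factor appears once in $\prod_v\mu_{\partial v}$ (from the head), so the ratio telescopes to
\begin{equation*}
\frac{\prod_v\mu_{\partial v}(\bB_{\partial v})}{\prod_e\mu_e(B_e)} = K\cdot z^{\sum_e B_e},\qquad K := z^{-|E|}\prod_v \frac{1}{T_v}\prod_e\bigl(z+Y_{u\to v}Y_{v\to u}\bigr),
\end{equation*}
with $K$ independent of $\bB$. Equating with $\mu^z_G(\bB) = z^{\sum B_e}/P_G(z)$ proves~(\ref{eq:muG}) with $Z = K\,P_G(z)$, so $\ln Z = \Phi_G(z) + \ln K$. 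For~(ii), use the identity $(z+Y_{u\to v}Y_{v\to u})^2 = Y_{u\to v}Y_{v\to u}\,T_uT_v$ (immediate from the two factorizations in Step~1) together with $Y_{u\to v}Y_{v\to u}/z = x_e/(1-x_e)$ to rewrite $\ln K$ purely in terms of $\bx(z)$; a term-by-term comparison with Proposition~\ref{prop:param}'s formula for $-\Phi^B(\bx;z) = -\sum_e x_e\ln z - S^B(\bx)$ yields $\ln K = -\Phi^B(\bx(z);z)$.

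For~(iii), expand each matching indicator using the Newton-type identity $\mathbf{1}(N\leq 1) = \sum_{k\geq 0}(-1)^{k-1}(k-1)\binom{N}{k}$, which for $N = N_v := \sum_{e\in\partial v}B_e$ becomes
\begin{equation*}
\mathbf{1}(N_v\leq 1) = \sum_{S_v\subset\partial v}(-1)^{|S_v|-1}(|S_v|-1)\prod_{e\in S_v}B_e.
\end{equation*}
Substitute into $P_G(z) = \sum_\bB z^{\sum_e B_e}\prod_v\mathbf{1}(N_v\leq 1)$, swap summations, and perform the sum over each $B_e\in\{0,1\}$ (yielding $1+z$ when $e\notin\bigcup_v S_v$ and $z$ otherwise):
\begin{equation*}
P_G(z) = \sum_{F\subset E}(1+z)^{|E\setminus F|}z^{|F|}\,\sigma(F),\qquad \sigma(F) := \sum_{\{S_v\}_v:\,\bigcup_v S_v = F}\prod_v(-1)^{|S_v|-1}(|S_v|-1).
\end{equation*}
Multiply by $K$ and use the Step~1 identities to rewrite $K\,(1+z)^{|E\setminus F|}z^{|F|}$ as $\prod_{e\in F}(x_e/(1-x_e))$ times residual vertex-BP factors indexed by $V(F)$. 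The combinatorial sum $\sigma(F)$ then combines with those vertex residuals and telescopes, vertex by vertex, into $(-1)^{V(F)}\prod_v(d_F(v)-1)$. The factor $(d_F(v)-1)$ annihilates all $F$ that are not generalized loops, yielding exactly~(\ref{eq:Z}).

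The main obstacle lies in the telescoping of part~(iii): coordinating the edge-level simplification of $K\cdot(1+z)^{|E\setminus F|}z^{|F|}$ into $\prod_{e\in F}\alpha_e$ (with $\alpha_e := x_e/(1-x_e)$) times vertex residuals, and then collapsing those residuals together with $\sigma(F)$ into $(-1)^{V(F)}\prod_v(d_F(v)-1)$. This is the specialization of Chertkov--Chernyak loop calculus~\cite{cc06} to the hard-core matching model: the BP gauge encoded in $K$ interacts precisely with the polynomial expansion of the matching constraints to isolate generalized loops with the advertised signs and multiplicities. Parts~(i) and~(ii) reduce to direct algebra once the Step~1 identities are in place.
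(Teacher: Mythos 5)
Your parts (i) and (ii) are sound and essentially coincide with the paper's own computation: the fixed-point identities $z+Y_{u\to v}Y_{v\to u}=Y_{u\to v}T_u=Y_{v\to u}T_v$, $1-\sum_{e\in\d v}x_e(z)=1/T_v$ and $x_{vw}(z)=Y_{w\to v}/T_v$ do give $\prod_v\mu_{\d v}(\bB_{\d v})/\prod_e\mu_e(B_e)=K\,z^{\sum_eB_e}$ on matchings with $K=e^{-\Phi^B(\bx(z);z)}$, hence (\ref{eq:muG}) and $\ln Z=\Phi_G(z)-\Phi^B(\bx(z);z)$. The paper does exactly this (using the identity (\ref{eq:xe(z)})), so nothing to object to there.

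Part (iii), which you yourself flag as the main obstacle, has a genuine gap: the proposed ``vertex-by-vertex telescoping'' of $K(1+z)^{|E\setminus F|}z^{|F|}\sigma(F)$ into $(-1)^{V(F)}\prod_v(d_F(v)-1)\prod_{e\in F}x_e/(1-x_e)$ cannot hold term by term in $F$, so no local collapse will finish the proof. Two concrete failures: the $F=\emptyset$ term of your expansion is $K(1+z)^{|E|}$, which should equal $1$ but on the triangle at $z=1$ equals $\bigl((1+z)/(1+Y)\bigr)^3\approx 1.89$; and for $F$ a path of two edges $uv,vw$ one has $\sigma(F)=-1\neq 0$ (the only admissible cover is $S_v=F$, since singletons carry weight $|S|-1=0$), while the target coefficient vanishes because $d_F(u)=d_F(w)=1$. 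Thus equality with (\ref{eq:Z}) requires cancellations \emph{across} different subsets $F$, i.e.\ a global resummation that your argument does not supply. The reason your expansion misbehaves is that you expand the constraints in the raw variables $B_e$, for which edges selected by only one endpoint survive. The paper's proof instead centers the variables: it shows
\begin{equation*}
\frac{\mu_{\d v}(\bB_{\d v})}{\prod_{e\in \d v}\mu_e(B_e)}=1-\sum_{S\subset\d v}(-1)^{|S|}(|S|-1)\prod_{e\in S}\frac{B_e-x_e(z)}{1-x_e(z)},
\end{equation*}
and reads $Z=\sum_{\bB}\prod_e\mu_e(B_e)\prod_v(\cdots)$ as an expectation under independent Bernoulli$(x_e)$ variables; since $\EE[B_e-x_e]=0$ and $\EE[(B_e-x_e)^2]=x_e(1-x_e)$, every edge of $F$ must be chosen by \emph{both} endpoints, forcing $S_v=\d v\cap F$ and producing the generalized-loop series with the factors $(d_F(v)-1)$ directly. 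To repair your route you would have to substitute $B_e=(B_e-x_e)+x_e$ and regroup, which amounts to rederiving that lemma; as written, part (iii) is not a proof.
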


Note that if $G$ is a tree, we recover that $Z=1$ and that our BP
algorithm computes exactly the marginals of the Gibbs distribution defined
by (\ref{eq:gibbs}).
However for general graphs, BP algorithm is not exact and equation
(\ref{eq:Z}) gives the exact correction term as a loop serie expansion
\cite{cc06}.
Explicit computation of these loops is in general intractable. Indeed
counting the total number of matchings $\exp\left(\Phi_G(1)\right)$ falls into the class of $\#
P$-complete problem. However equation (\ref{eq:Z}) can be used to
approximate such quantities by accounting for a small set of
significant loop corrections.

\section{Conclusion}\label{sec:conc}
We introduced an annealing BP algorithm for the vertex cover and matching
problems and showed its convergence, its relation to LP relaxation and
conditions for correctness. In contrast to previous results of this kind,
we do not rely on the a priori uniqueness of the solution to the
optimization problem. In view of the recent results \cite{salez12} and
\cite{llm13}, our approach should extend to more complex settings:
b-matching, capacited matching. Another direction worth investigating is
the question of the convergence time of our algorithm that we left open
(techniques used in \cite{sgg11} seems relevant).



\bibliographystyle{abbrv}


\section{Appendix: Proofs}\label{sec:proofs}
\subsection{Convergence of BP}
Given a set of messages
$\bX$, we define a new set of messages $\bY$ by:
\BEA
\label{eq:recR}Y_{u\to v} = \frac{1}{1+\sum_{w\in \d u\bac v}X_{w\to u}},
\EEA
with the convention that the sum over the empty set equals zero.
We denote by $\Rcal_G$ the mapping sending $\bX\in [0,\infty)^{\orE}$ to
$\bY=\Rcal_G(\bX)$. We also denote by $\Rcal_{\ore}$ the local update
rule (\ref{eq:recR}): $Y_{\ore}=\Rcal_{\ore}(\bX)$.
Note that the mapping $z\Rcal_G$ defined in (\ref{eq:rec}) is simply the
mapping multiplying by $z$ each component of the output of the mapping
$\Rcal_G$ (making the notation consistent).
\begin{proposition}\label{prop:ft}
\begin{itemize}
\item[(i)] For any finite graph $G$ and $z>0$, the fixed point equation:
\BEA
\label{eq:fp}\bX = z\Rcal_G(\bX)
\EEA
has a unique attractive solution denoted $\bY(z)\in (0,+\infty)^{\orE}$.
\item[(ii)] The function $z\mapsto \bY(z)$ is non-decreasing and the
  function $z\mapsto \frac{\bY(z)}{z}$ is non-increasing for $z>0$.
\item[(iii)] If in addition, $G$ is a finite tree, then for all $e\in E$, the law of $B_e$ under
$\mu^z_G$ is a Bernoulli distribution with
\BEA
\label{eq:margin}\mu^z_G\left(B_e=1\right) =
\frac{Y_{\ore}(z)\Rcal_{-\ore}(\bY(z))}{1+Y_{\ore}(z)\Rcal_{-\ore}(\bY(z))}.
\EEA
\end{itemize}
\end{proposition}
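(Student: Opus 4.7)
The plan is to establish parts (i), (ii), (iii) in order. For (i), I would exploit that the local update $X \mapsto z/(1+\sum X)$ is antitone in its inputs, so the two-step composition $(z\Rcal_G)^2$ is monotone (componentwise non-decreasing) on the compact box $[0,z]^{\orE}$. Iterating $(z\Rcal_G)^2$ from $\bX^0 = 0$ will produce a non-decreasing sequence converging to a fixed point $\bY^{(e)}(z)$, while iterating from $\bX^0 = z\mathbf{1}$ (equivalently, looking at odd iterates from $0$) will produce a non-increasing sequence converging to $\bY^{(o)}(z)$. These are the minimal and maximal fixed points of $(z\Rcal_G)^2$, and any fixed point of $z\Rcal_G$ lies between them. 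Existence and attractiveness of a unique $\bY(z)$ thus reduces to showing $\bY^{(e)}(z) = \bY^{(o)}(z)$; this is the Heilmann--Lieb absence of phase transition for the monomer-dimer model at any finite $z$, and I would invoke the contraction/coupling argument developed in \cite{bls12,salez12,lel} to close the gap.

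For (ii), I would prove by induction on $k \geq 1$ the joint invariant that $z \mapsto (z\Rcal_G)^k(0)$ is non-decreasing and $z \mapsto (z\Rcal_G)^k(0)/z$ is non-increasing. The base case $k=1$ is trivial since $(z\Rcal_G)(0) = z\mathbf{1}$. For the inductive step, writing $(z\Rcal_G)^k(0)_w = z\, g_w(z)$ with each $g_w$ non-increasing puts the next iterate in the form $z/(1 + z\, h(z))$ with $h = \sum_w g_w$ non-increasing; the first invariant then follows from the derivative computation
\begin{eqnarray*}
\frac{d}{dz}\left(\frac{z}{1+z\, h(z)}\right) = \frac{1 - z^2 h'(z)}{(1+z\, h(z))^2} \geq 0,
\end{eqnarray*}
while dividing this iterate by $z$ gives the second. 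Both statements then pass to the limit along the even iterates and transfer to $\bY(z)$ thanks to (i).

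For (iii), I would carry out the standard recursive computation on trees. For each directed edge $u \to v$, let $T_{u\to v}$ denote the component of $u$ in $G$ once the edge $uv$ is deleted, and set $p_{u\to v}(z) := \mu^z_{T_{u\to v}}(u \text{ unmatched})$. Decomposing the partition function $P_{T_{u\to v}}(z)$ according to the matching status of $u$ yields the recursion $p_{u\to v} = 1/(1 + z \sum_{w \in \d u \bac v} p_{w\to u})$, so that $z\, p_{u\to v}$ solves the fixed-point equation of $z\Rcal_G$; uniqueness from (i) then forces $Y_{u\to v}(z) = z\, p_{u\to v}(z)$. For the marginal on an edge $uv$, I would condition on whether $uv \in B$ and use the independence of $T_{u\to v}$ and $T_{v\to u}$ in the complement to obtain
\begin{eqnarray*}
\mu^z_G(B_{uv} = 1) = \frac{z\, p_{u\to v}\, p_{v \to u}}{1 + z\, p_{u \to v}\, p_{v\to u}},
\end{eqnarray*}
which, using $\Rcal_{-\ore}(\bY) = Y_{-\ore}/z$ at the fixed point, is exactly (\ref{eq:margin}).

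The main obstacle is the uniqueness claim in (i): showing that the extremal fixed points of the monotone map $(z\Rcal_G)^2$ coincide is equivalent to correlation decay for the monomer-dimer model, and is the only nontrivial ingredient, borrowed from the cited literature. Parts (ii) and (iii) are essentially bookkeeping once (i) is in hand.
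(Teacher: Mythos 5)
Your overall architecture coincides with the paper's: the same two-sided monotone iteration from $\bX^0=0$ for part (i), the same tandem induction for part (ii) (the paper phrases the inductive step as a direct comparison between $z$ and $z'$ rather than a derivative, but since each iterate is a rational function of $z$ your derivative computation is legitimate), and the same tree recursion for part (iii), which you phrase via $p_{u\to v}=\mu^z_{T_{u\to v}}(u\mbox{ unmatched})$ while the paper uses the odds ratio $\mu^z_{T_{u\to v}}(B_{uv}=1)/\mu^z_{T_{u\to v}}(B_{uv}=0)$; these are equivalent and both yield (\ref{eq:margin}).

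The one place where you have not actually supplied an argument is the only nontrivial step: showing that the increasing limit $\bX^-(z)$ of the even iterates and the decreasing limit $\bX^+(z)$ of the odd iterates coincide. You defer this to ``Heilmann--Lieb / correlation decay'' in the cited literature. The paper closes this gap with a short self-contained argument that you should reproduce rather than cite: since $\bX^+(z)=z\Rcal_G(\bX^-(z))$ and $\bX^-(z)=z\Rcal_G(\bX^+(z))$, one has $X^+_{\ore}(z)\Rcal_{-\ore}(\bX^+(z))=X^-_{-\ore}(z)\Rcal_{\ore}(\bX^-(z))$ for every directed edge, whence, summing the edge terms of $\Dcal_v$ in the form (\ref{def:D1}) and reindexing $\ore\mapsto-\ore$,
\begin{eqnarray*}
\sum_{v\in V}\Dcal_v(\bX^+(z))=\sum_{v\in V}\Dcal_v(\bX^-(z)).
\end{eqnarray*}
By the alternative form (\ref{def:D2}), each $\Dcal_v$ is strictly increasing in every coordinate of $(X_{\ore},\:\ore\in\d v)$, so this equality of sums together with $\bX^-(z)\leq\bX^+(z)$ forces $\bX^-(z)=\bX^+(z)$. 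This is elementary and requires no contraction or coupling estimate; without it (or an explicit import of the corresponding statement from \cite{salez12}), part (i) --- and hence the uniqueness you invoke in part (iii) to identify $Y_{u\to v}(z)=z\,p_{u\to v}(z)$ --- is not actually established.
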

Comparisons between vectors are always componentwise.
Note that the right-hand side of (\ref{eq:margin}) does not depend on
the choice of orientation of the edge $e$ as $\bY(z)$ satisfies (\ref{eq:fp}).
Before proving this proposition, we define for all $v\in V$, the
following function of the messages $(Y_{\ore},\:\ore\in \d v)$,
\BEA
\label{def:D1}\Dcal_v(\bY) &=& \sum_{\ore\in \d v} \frac{Y_{\ore}\Rcal_{-\ore}(\bY)}{1+Y_{\ore}\Rcal_{-\ore}(\bY)}\\
\label{def:D2}&=& \frac{\sum_{\ore \in \d v} Y_{\ore}}{1+\sum_{\ore \in \d v} Y_{\ore}}.
\EEA
In view of point (iii) of Proposition \ref{prop:ft}, we see that if
the graph $G$ is a tree, $\Dcal_v(\bY(z))$ is simply the probability for vertex $v$ to be
covered by a matching distributed according to $\mu_G^z$.
In particular, when $G$ is a tree, we can rewrite (\ref{eq:MG}) as
\BEA
\label{eq:MGb} \nu(G) =\lim_{z\to \infty} \frac{1}{2}\sum_{v\in V}\Dcal_v(\bY(z)).
\EEA

\begin{proof}
For the first point, we follow the proof of Theorem 3 in
\cite{salez12}. Let $z>0$ and define the sequence of messages: $\bX^0(z)=0$ and
for $t\geq 0$,
\BEA
\label{eq:Xt}X^{t+1}_{u\to v}(z) &=& \frac{z}{1+\sum_{w \in \d u\bac
    v}X^t_{w\to u}(z)}.
\EEA
The sequence $\bX^{2t}(z)$ (resp. $\bX^{2t+1}(z)$) is non-decreasing
(resp. non-increasing). We define $\lim_{t\to \infty}\uparrow
\bX^{2t}(z)=\bX^{-}(z)$ and $\lim_{t\to
  \infty}\downarrow\bX^{2t+1}(z)=\bX^{+}(z)$. For any $\bY(z)$ fixed point of
(\ref{eq:fp}), a simple induction shows that
\BEAS
0\leq\bX^{2t}(z)\leq \bX^{-}(z)\leq \bY(z)\leq \bX^{+}(z)\leq \bX^{2t+1}(z)\leq z.
\EEAS
We now prove that $\bX^{-}(z) =  \bX^{+}(z)$ finishing the proof of
the first point.
Note that we have
$\bX^{+}(z)=z\Rcal_G(\bX^{-}(z))$ and
$\bX^{-}(z)=z\Rcal_G(\bX^{+}(z))$. In particular for any $z>0$, we
have
$X^{+}_{\ore}(z)\Rcal_{-\ore}(\bX^+(z))=X^{-}_{-\ore}(z)\Rcal_{\ore}(\bX^-(z))$
so that in view of (\ref{def:D1}), we have
\BEA
\label{eq:fmtp}\sum_{v\in V}\Dcal_v(\bX^+(z)) = \sum_{v\in V}\Dcal_v(\bX^-(z)).
\EEA
We see from (\ref{def:D2}) that for each $v\in V$, $\Dcal_v$ is
an increasing function of the $(X_{\ore},\:\ore\in \d v)$, so that
(\ref{eq:fmtp}) together with $\bX^-(z)\leq \bX^+(z)$ imply the desired
result.

We now prove that $z\mapsto \frac{\bX^t(z)}{z}$ and
$z\mapsto \bX^t(z)$ are respectively non-increasing and
non-decreasing, this implies point (ii).
We prove it by induction on $t$: consider $z\leq z'$ if $\bX^t(z)\leq
\bX^t(z')$ then by (\ref{eq:Xt}) we have $\frac{\bX^{t+1}(z)}{z}\geq
\frac{\bX^{t+1}(z')}{z'}$ and if $\frac{\bX^t(z)}{z}\geq
\frac{\bX^t(z')}{z'}$ then again by (\ref{eq:Xt}), we have $\bX^{t+1}(z)\leq \bX^{t+1}(z')$.

We consider now the case where $G$ is a tree. For any directed edge
$u\to v$, we define $T_{u\to v}$ as the subtree containing $u$ and
$v$ and obtained from $G$ by removing all incident edges to $v$ except
the edge $uv$. A simple computation shows that
\BEAS
\frac{\mu_{T_{u\to v}}^z(B_{uv=1})}{\mu_{T_{u\to
      v}}^z(B_{uv=0})}=\frac{z}{1+\sum_{w\in \d u\bac v}
  \frac{\mu_{T_{w\to u}}^z(B_{wu=1})}{\mu_{T_{w\to u}}^z(B_{wu=0})}}.
\EEAS
This directly implies that for a finite tree, $Y_{u\to v}(z) =
\frac{\mu_{T_{u\to v}^z(B_{uv=1})}}{\mu_{T_{u\to v}^z(B_{uv=0})}}$.
Then a simple computation shows that
\BEAS
\frac{\mu_{G}^z(B_{uv=1})}{\mu_{G}^z(B_{uv=0})}
&=& \frac{Y_{u\to v}(z) Y_{v\to u}(z)}{z}\\
&=& Y_{u\to v}(z)\Rcal_{v\to u}(\bY(z)),
\EEAS
which directly implies (\ref{eq:margin}).
\end{proof}

\subsection{Zero temperature limit}
In order to compute the matching number, we must let $z$ tend to
infinity in $\bY(z)=z\Rcal_G(\bY(z))$. Iterating once this recursion,
we get $\bY(z) = z\Rcal_G(z\Rcal_G(\bY(z)))$. Note that we have for
any $z>0$,
\BEAS
z\Rcal_{u\to v}(z X) = \frac{1}{z^{-1} + \sum_{w\in \d u\bac v}
  X_{w\to u}}
\EEAS
Hence we can define for any $\bX\in (0,1]^{\orE}$, $\Qcal_G(\bX)
=\lim_{z\to \infty}\uparrow z\Rcal_G(z \bX)\in (0,\infty]^{\orE}$ by its local update rule:
\BEA
\label{def:Q} \Qcal_{u\to v} (\bX) =\frac{1}{\sum_{w\in \d u\bac v}
  X_{w\to u}},
\EEA
with the conventions $1/0=\infty$ and the sum over the empty set equals
zero (in particular, if $u$ is a leaf of the graph $G$, then
$\Qcal_{u\to v}(\bX) = \infty$).

By point (ii) of Proposition \ref{prop:ft}, we can define $\lim_{z\to
  \infty}\uparrow \bY(z)=\bY\in [0,\infty]^{\orE}$ and $\lim_{z\to
  \infty}\downarrow\frac{\bY(z)}{z}=\bX\in [0,1]^{\orE}$. Then, we have
\BEA
\label{eq:fpzero}\bX = \Rcal_G(\bY) \mbox{ and, } \bY = \Qcal_G(\bX), 
\EEA
provided we can extend the maps $\Rcal_G$ and $\Qcal_G$ continuously
from their respective domains $[0,\infty)^{\orE}$ and $(0,1]^{\orE}$
to their compactifications $[0,\infty]^{\orE}$ and $[0,1]^{\orE}$
respectively. This can be done easily as follows: if there exists
$w\in \d u\bac v$ with $Y_{w\to u}=\infty$, then we set $\Rcal_{u\to
  v}(\bY)=0$; and if $X_{w\to u}=0$ for all $w\in \d u\bac v$, then we
set $\Qcal_{u\to v}(\bX) = \infty$.

\begin{lemma}
\label{lem:min}Let $\lim_{z\to \infty}\uparrow \bY(z)=\bY\in [0,\infty]^{\orE}$. Then
$\bY$ is the
smallest solution to the fixed point equation $\bY=\Qcal_G\circ
\Rcal_G(\bY)$.
\end{lemma}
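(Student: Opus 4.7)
The plan is to establish that $\bY$ is a fixed point of $\Qcal_G\circ\Rcal_G$ and then identify it with the monotone limit of the iteration $(\Qcal_G\circ\Rcal_G)^t(0)$, which automatically sits below every fixed point.

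For the fixed-point property I would iterate (\ref{eq:fp}) once to write $\bY(z)=z\Rcal_G(z\Rcal_G(\bY(z)))$, and set $\bX(z)=\bY(z)/z=\Rcal_G(\bY(z))$, so that componentwise
\[
Y_{u\to v}(z)=\frac{1}{z^{-1}+\sum_{w\in\d u\bac v}X_{w\to u}(z)}.
\]
Proposition~\ref{prop:ft}(ii) gives $\bY(z)\uparrow\bY$ and $\bX(z)\downarrow\bX$ as $z\to\infty$, and the continuous extension of $\Rcal_G$ recalled just above the lemma yields $\bX=\Rcal_G(\bY)$. Letting $z\to\infty$ in the displayed identity and using the continuous extension of $\Qcal_G$, I obtain $\bY=\Qcal_G(\bX)=\Qcal_G\circ\Rcal_G(\bY)$.

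For minimality, the key observation is that $\Qcal_G\circ\Rcal_G$ is coordinatewise nondecreasing on $[0,\infty]^{\orE}$, because $\Rcal_G$ and $\Qcal_G$ are each coordinatewise nonincreasing (adding messages only enlarges the relevant denominators). Given any fixed point $\bY'$, the iterates $\bZ^0=0$, $\bZ^{t+1}=\Qcal_G\circ\Rcal_G(\bZ^t)$ are therefore nondecreasing and, by induction from $\bZ^0\leq\bY'$, satisfy $\bZ^t\leq\bY'$; hence $\bZ^\infty:=\lim_t\bZ^t\leq\bY'$. It thus remains to prove $\bY=\bZ^\infty$. For this I would use the partial iterates $\bX^{2t}(z)$ from the proof of Proposition~\ref{prop:ft}(i): they increase to $\bY(z)$ in $t$ for each $z$ and are nondecreasing in $z$ for each $t$ by Proposition~\ref{prop:ft}(ii). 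Unfolding two steps of the recursion gives, with $R_{w\to u}(\bX)=(1+\sum_{x\in\d w\bac u}X_{x\to w})^{-1}$,
\[
X^{2t}_{u\to v}(z)=\Bigl(z^{-1}+\sum_{w\in\d u\bac v}R_{w\to u}\bigl(\bX^{2t-2}(z)\bigr)\Bigr)^{-1},
\]
and a routine induction on $t$, using continuity of $R_{w\to u}$ on the compactified domain, yields $\lim_{z\to\infty}\bX^{2t}(z)=\bZ^t$. Since $\bX^{2t}(z)$ is jointly nondecreasing in $(t,z)$, the two suprema commute and
\[
\bY=\sup_z\sup_t\bX^{2t}(z)=\sup_t\sup_z\bX^{2t}(z)=\bZ^\infty,
\]
which combined with $\bZ^\infty\leq\bY'$ gives $\bY\leq\bY'$ for every fixed point $\bY'$.

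The delicate point is the handling of the boundary values $0$ and $\infty$ when passing to the limit in the local updates; this is exactly what the continuous extensions of $\Rcal_G$ and $\Qcal_G$ set up just before the lemma are designed for, and once the coordinatewise monotonicity is in hand the interchange of suprema is automatic since both orders equal the joint supremum.
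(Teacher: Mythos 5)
Your proof is correct, and its overall architecture matches the paper's: the fixed-point property of $\bY$ is obtained exactly as in the discussion surrounding (\ref{eq:fpzero}) (one iteration of (\ref{eq:fp}) plus the continuous extensions of $\Rcal_G$ and $\Qcal_G$), and the minimality is extracted from the doubly-monotone family $\bX^{2t}(z)$. Where you diverge is in how the comparison with an arbitrary fixed point $\bY'$ is made. The paper uses the one-line pointwise inequality $z\Rcal_G(z\bX)\leq \Qcal_G(\bX)$, valid for every finite $z$ and every $\bX\in[0,1]^{\orE}$, to show directly by induction (via the monotonicity of $\Qcal_G\circ\Rcal_G$) that $\bX^{2t}(z)\leq \bY'$ for all $t$ and $z$, and then lets $t\to\infty$ and $z\to\infty$; no limit of $\bX^{2t}(z)$ in $z$ at fixed $t$ ever needs to be computed. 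You instead prove the exact identification $\lim_{z\to\infty}\bX^{2t}(z)=(\Qcal_G\circ\Rcal_G)^t(0)=\bZ^t$ by induction on $t$, interchange the two suprema, and conclude $\bY=\bZ^\infty\leq\bY'$. Your route costs an extra convergence argument (passing to the limit in the two-step recursion on the compactified domain, including the boundary cases $0$ and $\infty$, which you correctly flag as the delicate point), but it buys a strictly stronger conclusion: the zero-temperature iteration $(\Qcal_G\circ\Rcal_G)^t(0)$ itself converges to $\bY$, i.e.\ $\bY$ can be computed without ever introducing the annealing parameter $z$. The paper's inequality-based argument is shorter and is all that the lemma requires, but it does not yield this identification.
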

\begin{proof}
Let $\bZ=\Qcal_G\circ\Rcal_G(\bZ)$. For any $z>0$, we have for any
$\bX\in [0,1]^{\orE}$, $z\Rcal_G(z\bX)\leq \Qcal_G(\bX)$ so that an
easy induction implies that $\bX^{2t}(z)\leq \bZ$ where $\bX^{2t}(z)$ is
the sequence defined in the proof of Proposition \ref{prop:ft}.
Letting first $t$ and then $z$ tend to infinity, allows us to conclude.
\end{proof}

Note that thanks to (\ref{def:D2}), we can extend the functions
$\Dcal_v(\bY)$ continuously on $[0,\infty]^{\orE}$ by setting
$\Dcal_v(\bY)= 1$ as soon as there exists $Y_{\ore}=\infty$ for
$\ore\in \d v$.
To summarize, we have for each $v\in V$,
\BEA
\label{eq:Dv0}\lim_{z\to \infty}\Dcal_v(\bY(z)) = \Dcal_v(\bY)\leq 1,
\EEA
where $\bY$ is the smallest solution to the fixed point equation
$\bY=\Qcal_G\circ\Rcal_G(\bY)$ that can be written as:
\BEA
\label{eq:rec0}Y_{u\to v} = \frac{1}{\sum_{w\in \d u\bac
    v}\frac{1}{1+\sum_{w'\in \d w\bac u}Y_{w'\to w}}},
\EEA
with the conventions $1/0=\infty$ and $1/\infty =0$ and the sum over
the empty set equals zero.

\begin{lemma}\label{lem:Dv}
We have for any $\bY\in [0,\infty]^{\orE}$ and $v\in V$,
\BEA
\nonumber \Dcal_v(\bY) &=& \sum_{\ore\in \d
  v}\frac{Y_{\ore}\Rcal_{-\ore}(\bY)}{1+Y_{\ore}\Rcal_{-\ore}(\bY)}\ind\left(
  Y_{\ore}<\infty\right)\\
\label{def:D0}&&+ \ind\left( \exists \ore\in \d v,
  Y_{\ore}=\infty\right),
\EEA
where the first sum on the right-hand side should be understood as a
sum over $\ore\in \d v$ with $Y_{\ore}<\infty$.
\end{lemma}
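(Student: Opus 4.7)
The plan is to prove the identity by case analysis on whether any $Y_{\ore}$ with $\ore\in\d v$ equals $\infty$.

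Suppose first that every message $Y_{\ore}$ with $\ore\in\d v$ is finite. Then the indicator term on the right-hand side of (\ref{def:D0}) vanishes, so the claim reduces to showing
\[
\sum_{\ore\in\d v}\frac{Y_{\ore}\Rcal_{-\ore}(\bY)}{1+Y_{\ore}\Rcal_{-\ore}(\bY)} = \frac{\sum_{\ore\in\d v}Y_{\ore}}{1+\sum_{\ore\in\d v}Y_{\ore}},
\]
which is just (\ref{def:D1})$=$(\ref{def:D2}). I would verify it by setting $S=\sum_{\ore'\in\d v}Y_{\ore'}$ and using the explicit form $\Rcal_{-\ore}(\bY)=1/(1+S-Y_{\ore})$ coming from (\ref{eq:recR}); a short manipulation reduces each summand to $Y_{\ore}/(1+S)$, so summing over $\ore\in\d v$ yields $S/(1+S)$.

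Suppose next that some $\ore_0\in\d v$ satisfies $Y_{\ore_0}=\infty$. By the continuous extension of $\Dcal_v$ fixed just before the statement, $\Dcal_v(\bY)=1$, and the indicator in (\ref{def:D0}) also equals $1$. It remains to show that the first sum on the right-hand side of (\ref{def:D0}) vanishes. Fix any $\ore\in\d v$ with $Y_{\ore}<\infty$ and write $\ore=u\to v$ and $\ore_0=u_0\to v$. Since $\ore\neq\ore_0$, we have $u_0\in\d v\bac u$, so the denominator in $\Rcal_{-\ore}(\bY)=1/(1+\sum_{w\in\d v\bac u}Y_{w\to v})$ contains the infinite summand $Y_{\ore_0}$, forcing $\Rcal_{-\ore}(\bY)=0$ by the continuous extension prescription for $\Rcal_G$. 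Consequently each surviving term of the sum is $0$, finishing the argument.

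I do not anticipate any real obstacle: the lemma is a bookkeeping step verifying that the various conventions $1/0=\infty$, $1/\infty=0$, and $\Dcal_v=1$ in the presence of an infinite incident message interact coherently. The only mild subtlety is that in the infinite case one must invoke the convention defining $\Dcal_v$ rather than plug directly into (\ref{def:D1}), but this is handled by the observation that a single infinite incoming message $Y_{\ore_0}$ at $v$ annihilates $\Rcal_{-\ore}(\bY)$ for every other $\ore$ incident to $v$.
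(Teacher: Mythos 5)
Your proof is correct and follows essentially the same route as the paper: the finite case is the identity (\ref{def:D1})$=$(\ref{def:D2}), and in the infinite case the key observation is exactly the paper's, namely that $Y_{\ore_0}=\infty$ forces $\Rcal_{-\ore}(\bY)=0$ for every other $\ore\in\d v$. You merely spell out the algebra of the finite case, which the paper leaves implicit.
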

Note that since $\Rcal_{-\ore}(\bY)\in [0,1]$, the product
$Y_{\ore}\Rcal_{-\ore}(\bY)$ is always well-defined in the expression above.
\begin{proof}
We only need to consider the case where there exists $\ore\in \d v$
such that $Y_{\ore}=\infty$. By the discussion before the lemma, we have in this
case $\Dcal_v(\bY)=1$.
Hence we need to prove that the first term on the right-hand side of
(\ref{def:D0}) vanishes. This follows form the following fact: let
$\ore'\in \d v\bac \ore$, then $Y_{\ore}=\infty$ implies that
$\Rcal_{-\ore'}(\bY)=0$.
\end{proof}

For the messages $\bY\in [0,\infty]^{\orE}$ (resp. $\bX\in
[0,1]^{\orE}$) defined in (\ref{eq:fpzero}), we define the
$\{0,1\}$-valued messages $\bI^Y$ (resp. $\bI^X$) by $I_{u\to v}^Y
=\ind(Y_{u\to v}=\infty)$ (resp. $I^X_{u\to v} = \ind(X_{u\to v}>0)$.
It follows directly from (\ref{eq:fpzero}) and the definition of $\Pcal_G$ (\ref{def:Pcal}) that
\BEA
\label{eq:fp0}\bI^Y = \Pcal_G(\bI^X),\mbox{ and, } \bI^X=\Pcal_G(\bI^Y).
\EEA

We now show that for any finite graph $G$, the right-hand term in
(\ref{eq:MGb}) is a function of $\bI^X$ and $\bI^Y$ only.

For any $Y\in [0,\infty]$, we define $I(Y)=\ind(Y=\infty)$ and
still denote by $I$ the function acting similarly on vectors
componentwise, i.e. if $\bI=I(\bY)$ then $I_{\ore}=I(Y_{\ore})$.
\begin{lemma}\label{lem:mon}
For $\bY\in [0,\infty]^{\orE}$, we define $\bY'=\Qcal_G\circ
\Rcal_G(\bY)$. If $\bY\geq $ (resp. $\leq$) $\bY'$, then 
\BEAS
\sum_{v\in V}\Dcal_v(\bY) \geq \mbox{ (resp. $\leq$) } \sum_{v\in V} F_v(I(\bY')),
\EEAS
where $F_v$ was defined in (\ref{def:Fv}).
\end{lemma}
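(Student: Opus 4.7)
My plan is to analyze the difference $\sum_v \Dcal_v(\bY) - \sum_v F_v(I(\bY'))$ by decomposing each side into a count of infinity-carrying vertices plus a residual edge sum, and then using the hypothesis $\bY \geq \bY'$ (resp.\ $\leq$) to control the sign of each piece.

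First, I would rewrite the left-hand side using the formula from Lemma \ref{lem:Dv}:
\[ \sum_v \Dcal_v(\bY) = N_\infty(\bY) + S_{\mathrm{fin}}(\bY), \]
where $N_\infty(\bY) = |\{v : \exists\, \ore \in \d v,\ Y_{\ore} = \infty\}|$ and $S_{\mathrm{fin}}(\bY)$ is the sum of $\frac{Y_{\ore}\Rcal_{-\ore}(\bY)}{1+Y_{\ore}\Rcal_{-\ore}(\bY)}$ over the directed edges $\ore$ with $Y_{\ore} < \infty$. Unpacking (\ref{def:Fv}), I would similarly write
\[ \sum_v F_v(I(\bY')) = |\{v : \exists\, u \in \d v,\ Y'_{u \to v} = \infty\}| + |\{v : \forall\, u \in \d v,\ Y'_{v \to u} < \infty\}|. \]

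Under the hypothesis $\bY \geq \bY'$, the componentwise inequality $I(\bY) \geq I(\bY')$ is immediate, so the set of vertices with some infinite incoming $Y$-message contains the analogous set for $\bY'$; hence $N_\infty(\bY)$ already dominates the first count on the right. The heart of the proof is therefore to establish
\[ S_{\mathrm{fin}}(\bY) \geq |\{v : \forall\, u \in \d v,\ Y'_{v \to u} < \infty\}|. \]
For this, I would use a clean combinatorial characterization of the condition on the right: $Y'_{v \to u} < \infty$ for every $u \in \d v$ is equivalent to the existence of at least two neighbors $w \in \d v$ with $\Rcal_{w \to v}(\bY) > 0$, equivalently with $\Pcal_{w \to v}(I(\bY)) = 1$. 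The hypothesis $\bY \geq \bY'$ translates, on each finite directed edge $\ore = u \to v$, into the inequality $Y_{\ore} \cdot \sum_{w \in \d u \bac v} \Rcal_{w \to u}(\bY) \geq 1$, which I would use to extract a contribution of at least one from $S_{\mathrm{fin}}(\bY)$ for each vertex $v$ counted on the right-hand side, by charging a suitable pair of incident edges to that vertex.

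The $\leq$ direction of the lemma follows by reversing the monotonicity throughout: $\bY \leq \bY'$ forces $I(\bY) \leq I(\bY')$, and the analogous reverse inequalities for the residual sums are obtained from the same local identities with the roles of lower and upper bounds swapped. The main obstacle I anticipate is precisely the last step: designing the local charging from candidate vertices to pairs of their incident directed edges so that no edge is over-charged across the whole graph, while simultaneously handling the edges with $Y_{\ore} = \infty$ (which contribute to $N_\infty$ but not to $S_{\mathrm{fin}}$) and the finite ones on equal footing. This careful bookkeeping is exactly the sort of subtle monotonicity argument for the local iteration highlighted in the introduction, and making it work uniformly in $\bY$ is where the real technical difficulty lies.
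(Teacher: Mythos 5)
There is a genuine gap: your decomposition of the target inequality into the two \emph{separate} inequalities $N_\infty(\bY)\geq C_1(\bY')$ and $S_{\mathrm{fin}}(\bY)\geq C_2(\bY')$ (where $C_1,C_2$ denote the two counts on the right) is not valid, because the second inequality is simply false in general. Take $G$ the path $w-v-u-x-y$ and set $Y_{v\to w}=2$, $Y_{x\to u}=1$ and $Y_{\ore}=\infty$ on all other directed edges. One checks that $\bY'=\Qcal_G\circ\Rcal_G(\bY)$ has $Y'_{v\to w}=2$, $Y'_{v\to u}=Y'_{x\to u}=1$ and $Y'_{\ore}=\infty$ otherwise, so $\bY\geq\bY'$. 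Here $C_2(\bY')=1$ (only $v$ has all outgoing $Y'$ finite), but $S_{\mathrm{fin}}(\bY)=\frac{2}{3}$: the edge $x\to u$ contributes $0$ because $\Rcal_{u\to x}(\bY)=(1+Y_{v\to u})^{-1}=0$. So $S_{\mathrm{fin}}<C_2$, while the lemma itself holds ($\frac{14}{3}\geq 4$) only because the slack $N_\infty-C_1=4-3=1$ compensates. The same phenomenon kills the $\leq$ direction: on the path $u_1-v-u_2$ with $Y_{u_i\to v}$ large and $Y_{v\to u_i}=1$ one has $\bY\leq\bY'$ but $S_{\mathrm{fin}}(\bY)$ close to $2$ while $C_2(\bY')=1$. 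The two pieces of $F_v$ cannot be matched against the two pieces of $\Dcal_v$ vertex by vertex; the transfer between them is essential. Your proposed "charging" of each $C_2$-vertex to a pair of incident edges is exactly the step that cannot work, because the candidate terms of $S_{\mathrm{fin}}$ at such a vertex may all be annihilated by an infinite message elsewhere (the mechanism behind Lemma \ref{lem:Dv}), and you yourself flag this step as not carried out.

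The paper's proof keeps the two contributions coupled. It first bounds $\sum_v\Dcal_v(\bY)$ below by $A+\sum_v\ind(\exists\,\ore\in\d v,\ Y'_{\ore}=\infty)$, where $A=\sum_{\ore}\frac{Y'_{\ore}\Rcal_{-\ore}(\bY)}{1+Y'_{\ore}\Rcal_{-\ore}(\bY)}\ind(Y'_{\ore}<\infty)$ carries $\bY'$ (not $\bY$) in the numerators; the point is that at a vertex with an infinite incoming $Y$-message, $\Dcal_v(\bY)=1$ dominates the surviving $A$-term \emph{or} the indicator, whichever is present, so no separate bookkeeping of a deficit is needed. The second key step, which your sketch is missing entirely, is to re-index $A$ over the reversed edges (writing $\bX=\Rcal_G(\bY)$, $\bY'=\Qcal_G(\bX)$ and grouping by tail vertex) and to evaluate the resulting per-vertex sum exactly: $B_v=\sum_{\ore\in\d v}\frac{X_{\ore}\Qcal_{-\ore}(\bX)}{1+X_{\ore}\Qcal_{-\ore}(\bX)}\ind(\Qcal_{-\ore}(\bX)<\infty)$ telescopes to $1$ precisely when all $Y'_{v\to u}$ are finite and to $0$ otherwise, giving $A=\sum_v\bigl(1-\sum_{\ore\in\d v}I_{-\ore}(\bY')\bigr)^+$ as an identity rather than an inequality to be charged. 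You would need to replace your two-part split by this coupled estimate (or find another way to route the slack from the infinity count into the finite-edge sum) for the argument to go through.
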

\begin{proof}
Suppose $\bY'\leq \bY$, then using Lemma \ref{lem:Dv}, we get
\BEAS
\sum_v \Dcal_v(\bY) &\geq & \underbrace{\sum_{\ore\in \orE}\frac{Y'_{\ore}\Rcal_{-\ore}(\bY)}{1+Y'_{\ore}\Rcal_{-\ore}(\bY)}\ind\left(
  Y'_{\ore}<\infty\right)}_{A}\\
&&+\sum_{v\in V}\ind\left( \exists \ore\in \d v,
  Y'_{\ore}=\infty\right).
\EEAS
For the first term $A$, denote $\bX=\Rcal_G(\bY)$ so that
$\bY'=\Qcal_G(\bX)$. 
Then we have
\BEAS
A &=& \sum_{\ore\in
  \orE}\frac{\Qcal_{\ore}(\bX)X_{-\ore}}{1+\Qcal_{\ore}(\bX)X_{-\ore}}\ind\left(\Qcal_{\ore}(\bX)<\infty
\right)\\
&=&\sum_{\ore\in
  \orE}\frac{\Qcal_{-\ore}(\bX)X_{\ore}}{1+\Qcal_{-\ore}(\bX)X_{\ore}}\ind\left(\Qcal_{-\ore}(\bX)<\infty
\right)\\
&=& \sum_{v\in V} \underbrace{\sum_{\ore\in \d v}\frac{X_{\ore}\Qcal_{-\ore}(\bX)}{1+X_{\ore}\Qcal_{-\ore}(\bX)}\ind\left(\Qcal_{-\ore}(\bX)<\infty
\right)}_{B_v}.
\EEAS
We now prove that
\BEA
\label{eq:Bv}B_v = \left(1- \sum_{\ore\in \d v} I_{-\ore}(\bY')\right)^+.
\EEA
First note that if $J(\bX)$ is defined by $J_{\ore}(\bX)=\ind(X_{\ore}>0)$, then we have $\Pcal_G(J(\bX))=I(\bY')$. Hence if $\sum_{\ore\in \d v} I_{-\ore}(\bY')=0$, then $\exists w\neq w'$ both in $\d v$ with $X_{w\to v}X_{w'\to v}>0$. This in turn implies that $0<\Qcal_{-\ore}(\bX)<\infty$ for all $\ore\in \d v$, so that in this case we have
\BEAS
B_v = \sum_{\ore\in \d v}\frac{X_{\ore}}{\Qcal_{-\ore}(\bX)^{-1}+X_{\ore}}=1.
\EEAS
 
Note now that if $B_v>0$, there must
exists $\ore\in \d v$ such that $X_{\ore}>0$ and
$\Qcal_{-\ore}(\bX)<\infty$ and this last constraint implies that
there exists $\ore'\neq \ore$ with $\ore' \in \d v$ with $X_{\ore'}>0$. In particular, we have $B_v=1$ and $\sum_{\ore\in \d v} I_{-\ore}(\bY')=0$. This finished the proof of (\ref{eq:Bv}).
The lemma then follows.
\end{proof}
We are now ready to state our first main result for finite graphs:
\begin{proposition}\label{prop:limD}
For any finite graph $G$, we have
\BEAS
\sum_{v\in V} \Dcal_v(\bY)=\lim_{z\to \infty}\sum_{v\in V}\Dcal_v(\bY(z)) = \inf_{\bI} \sum_{v\in V}F_v(\bI),
\EEAS
where the infimum is over the solutions of
$\bI=\Pcal_G\circ\Pcal_G(\bI)$.
\end{proposition}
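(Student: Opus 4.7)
My plan is to prove the two equalities separately.

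For the first equality $\sum_v \Dcal_v(\bY) = \lim_{z\to\infty}\sum_v \Dcal_v(\bY(z))$, I would rely on monotone convergence: by point (ii) of Proposition \ref{prop:ft} the messages $\bY(z)$ increase to $\bY$ as $z \to \infty$, and the extension of $\Dcal_v$ given in Lemma \ref{lem:Dv} is monotone and (lower) continuous on $[0,\infty]^{\orE}$, so (\ref{eq:Dv0}) together with summation over the finite set $V$ delivers the claim.

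For the second equality I would split into two directions. The direction $\sum_v \Dcal_v(\bY) \geq \inf_\bI \sum_v F_v(\bI)$ is essentially free: since $\bY$ is itself a fixed point of $\Qcal_G \circ \Rcal_G$ by Lemma \ref{lem:min}, both halves of Lemma \ref{lem:mon} apply with $\bY = \bY'$ and together give $\sum_v \Dcal_v(\bY) = \sum_v F_v(\bI^Y)$, while $\bI^Y$ is a fixed point of $\Pcal_G \circ \Pcal_G$ by (\ref{eq:fp0}).

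The real work is the reverse inequality $\sum_v \Dcal_v(\bY) \leq \sum_v F_v(\bI)$ for every fixed point $\bI$ of $\Pcal_G \circ \Pcal_G$. My plan is to dominate $\bY$ by a fixed point of $\Qcal_G \circ \Rcal_G$ that is cleanly tied to $\bI$. Concretely I would set $Y^\bI_{0,\ore} = \infty$ if $I_\ore = 1$ and $Y^\bI_{0,\ore} = 0$ otherwise, and iterate $\bY^\bI_{n+1} = \Qcal_G \circ \Rcal_G(\bY^\bI_n)$. The key algebraic lemma to establish is the identity $I(\Qcal_G \circ \Rcal_G(\bY)) = \Pcal_G \circ \Pcal_G(I(\bY))$ for any $\bY \in [0,\infty]^{\orE}$, which reduces to noting that $\Rcal_{w\to u}(\bY) = 0$ iff $\Pcal_{w\to u}(I(\bY)) = 0$. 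Combined with the fixed-point hypothesis on $\bI$, this yields by induction the invariant $I(\bY^\bI_n) = \bI$ for all $n \geq 0$. A direct check at $n = 0$ also gives $\bY^\bI_0 \leq \bY^\bI_1$, and monotonicity of $\Qcal_G \circ \Rcal_G$ (composition of two anti-tone maps) propagates this to $\bY^\bI_n \leq \bY^\bI_{n+1}$ for all $n$. Letting $\bZ^\bI$ be the resulting monotone limit in $[0,\infty]^{\orE}$, the monotone continuity of $\Qcal_G$ and $\Rcal_G$ makes $\bZ^\bI$ a fixed point of $\Qcal_G \circ \Rcal_G$, hence $\bY \leq \bZ^\bI$ by Lemma \ref{lem:min}. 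I would then apply the $\leq$-direction of Lemma \ref{lem:mon} at each $n$ to get $\sum_v \Dcal_v(\bY^\bI_n) \leq \sum_v F_v(I(\bY^\bI_{n+1})) = \sum_v F_v(\bI)$, and pass to the limit to obtain $\sum_v \Dcal_v(\bZ^\bI) \leq \sum_v F_v(\bI)$; combining with $\sum_v \Dcal_v(\bY) \leq \sum_v \Dcal_v(\bZ^\bI)$ (monotonicity of $\Dcal_v$) closes the argument.

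The hard part is precisely the inductive invariant $I(\bY^\bI_n) = \bI$, and the fact that one must extract the Lemma \ref{lem:mon} bound at each finite $n$ \emph{before} taking the limit. The subtlety is that $\bZ^\bI$ itself may satisfy $I(\bZ^\bI) \neq \bI$ (for instance on $C_4$ with $\bI = \mathbf{0}$ one finds $\bY^\bI_n \equiv n$ and $\bZ^\bI \equiv \infty$, so $I(\bZ^\bI) = \mathbf{1}$), so if one applied Lemma \ref{lem:mon} directly at $\bZ^\bI$ the $F_v(I(\bZ^\bI))$ produced would no longer be the quantity we want to bound. The whole argument therefore hinges on the observation that the $\Qcal_G \circ \Rcal_G$ dynamics, read at the level of the $I$-profile, collapses to the $\Pcal_G \circ \Pcal_G$ dynamics, which is by assumption stationary at $\bI$.
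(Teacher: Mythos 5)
Your proposal is correct and follows essentially the same route as the paper: the $\geq$ direction via Lemma \ref{lem:mon} applied at the fixed point $\bY$, and the $\leq$ direction via the auxiliary sequence started from the $\{0,\infty\}$-valued lift of $\bI$ (the paper's $\bW^k$), the invariant $I(\bW^{k+1})=\Pcal_G\circ\Pcal_G(I(\bW^k))=\bI$, Lemma \ref{lem:mon} at each finite step, and minimality of $\bY$ from Lemma \ref{lem:min}. Your remark that the bound must be extracted at finite $n$ because $I$ of the limit may differ from $\bI$ (as on $C_4$ with $\bI=\mathbf{0}$) is a correct and worthwhile clarification of a point the paper passes over silently.
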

\begin{proof}
Let $\bY=\lim_{z\to \infty}\uparrow \bY(z)$ and recall that we denoted
$\bI^Y =I(\bY)$ so that $\bI^Y =
\Pcal_G\circ\Pcal_G(\bI^Y)$ by (\ref{eq:fp0}).
By Lemma \ref{lem:mon} and (\ref{eq:Dv0}), we have
\BEA
\label{eq:limIY}\lim_{z\to \infty}\sum_{v\in
  V}\Dcal_v(\bY(z)) =\sum_{v\in V} \Dcal_v(\bY)=\sum_{v\in V} F_v(\bI^Y).
\EEA

We need to prove that if $\bI=\Pcal_G\circ \Pcal_G(\bI)$ then we
have $\sum_v F_v(\bI)\geq \sum_{v\in V} \Dcal_v(\bY)$.
For any such $\bI$, we define $\bW^0$ as follows:
\BEAS
W^0_{\ore} = \left\{
\begin{array}{ll}
\infty & \mbox{if }I_{\ore} =1\\
0&\mbox{otherwise.}
\end{array}
\right.
\EEAS
Then let $\bW^{k+1}=\Qcal_G\circ \Rcal_G(\bW^k)$ for $k\geq 0$. A simple
induction shows that $I(\bW^{k+1})=\Pcal_G\circ\Pcal_G(I(\bW^k))=\bI$
for all $k\geq 0$. In particular, $\bW^0\leq \bW^1$ and again by
induction, we see that the sequence $\{\bW^k\}_k$ is non-decreasing and we denote by
$\bW^{\bI}$ its limit. 
Applying Lemma \ref{lem:mon} to $\bW^k$, we get
\BEAS
\sum_{v\in V} \Dcal_v(\bW^{k}) \leq \sum_{v\in V} F_v(\bI).
\EEAS
Taking the limit $k\to \infty$, we obtain
\BEAS
\sum_{v\in V} \Dcal_v(\bW^{\bI}) \leq \sum_{v\in V} F_v(\bI)
\EEAS
Moreover $\bY$ being the smallest solution to the fixed point equation
$\bY=\Rcal_G\circ \Rcal_G(\bY)$, we have $\bY\leq \bW^{\bI}$ and using
the fact that $\Dcal_v$ is increasing, we get
\BEAS
\sum_{v\in V} \Dcal_v(\bY)\leq \sum_{v\in V} \Dcal_v(\bW^{\bI})\leq \sum_{v\in V} F_v(\bI),
\EEAS
which concludes the proof.
\end{proof}

We now prove
\begin{lemma}\label{lem:fraccov}
If $\bI=\Pcal_G\circ\Pcal_G(\bI)$, then $(F_v(\bI)/2,\:v \in V)$ is a half-integral vertex cover.
\end{lemma}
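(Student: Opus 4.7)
The plan is to verify the two defining conditions of a (half-integral) fractional vertex cover for the vector $(F_v(\bI)/2,\: v\in V)$: membership in $[0,1]$, and the edge inequalities $F_u(\bI)/2 + F_v(\bI)/2 \geq 1$ for every $(uv)\in E$. The first is immediate: since $\bI$ is $\{0,1\}$-valued, each of the two summands in definition (\ref{def:Fv}) takes values in $\{0,1\}$, so $F_v(\bI)\in\{0,1,2\}$ and $F_v(\bI)/2\in\{0,1/2,1\}$. This simultaneously gives half-integrality and the box constraint. The work is entirely in the edge inequality.

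For this, I would introduce $\bJ := \Pcal_G(\bI)$, noting that the hypothesis $\bI=\Pcal_G\circ\Pcal_G(\bI)$ also yields $\bI=\Pcal_G(\bJ)$, so both directions of the two-step relation are available. Writing $F_v = a_v + b_v$ with $a_v := 1\wedge\sum_{w\in\d v}I_{w\to v}$ and $b_v := \bigl(1-\sum_{w\in\d v}I_{v\to w}\bigr)^+$, each of $a_v,b_v$ lies in $\{0,1\}$. The only nontrivial case is when $F_u=0$ or $F_v=0$; by the symmetry of the edge, it suffices to suppose $F_u=0$ and to show $F_v=2$. Unfolding $F_u=0$ yields (A) $I_{w\to u}=0$ for every $w\in\d u$, and (B) there exists $w_1\in\d u$ with $I_{u\to w_1}=1$.

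Showing $a_v=1$ splits into two subcases. If $w_1=v$ then $I_{u\to v}=1$ already witnesses $a_v=1$. If $w_1\neq v$, then (B) combined with $\bI=\Pcal_G(\bJ)$ forces $J_{w\to u}=0$ for every $w\in\d u\setminus w_1$; in particular $J_{v\to u}=0$, and re-expanding $\bJ=\Pcal_G(\bI)$ produces some $w'\in\d v\setminus u$ with $I_{w'\to v}=1$, again giving $a_v=1$. Showing $b_v=1$ means $I_{v\to w}=0$ for every $w\in\d v$. The case $w=u$ follows directly from (A). For $w\in\d v\setminus u$, I would note that $I_{v\to w}=0$ is equivalent to $\sum_{w'\in\d v\setminus w}J_{w'\to v}\geq 1$, and since $u\in\d v\setminus w$, it suffices to prove $J_{u\to v}=1$; this is immediate from (A) via $J_{u\to v}=\ind\bigl(\sum_{w''\in\d u\setminus v}I_{w''\to u}=0\bigr)$.

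The only real subtlety is to keep the two layers $\bI$ and $\bJ=\Pcal_G(\bI)$ straight: conditions on incoming or outgoing $\bI$-messages at a vertex translate into conditions on $\bJ$-messages one step away, and vice versa. Introducing $\bJ$ explicitly makes each of the four small implications above a one-line unfolding of the definition of $\Pcal_G$; trying to argue purely in terms of $\bI$ through the composed relation $\Pcal_G\circ\Pcal_G$ quickly becomes opaque. Once this bookkeeping is in place, the proof is essentially a diagram chase.
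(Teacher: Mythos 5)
Your proof is correct; each of the four definitional unfoldings checks out, and the half-integrality and box constraints are indeed immediate. The paper assembles the edge inequality with a slightly different and more economical bookkeeping. Writing $F_v(\bI)=a_v+b_v$ with $a_v=1\wedge\sum_{w\in\d v}I_{w\to v}$ and $b_v=\bigl(1-\sum_{w\in\d v}I_{v\to w}\bigr)^+$ as you do, its only nontrivial ingredient is the single implication $a_u=0\Rightarrow b_v=1$ (your $b_v$ step: if all $\bI$-messages into $u$ vanish, then $J_{u\to v}=1$, hence every $\bI$-message out of $v$ vanishes). It then pairs the four terms across the edge as $(a_u+b_v)+(a_v+b_u)$, invoking that implication once in each direction to get $a_u+b_v\geq a_u+\ind(a_u=0)\geq 1$ and symmetrically, whence the sum is at least $2$. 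Your reduction to \emph{$F_u=0$ implies $F_v=2$} additionally forces you to establish $a_v=1$, which is exactly your subcase analysis on $w_1$ using $b_u=0$ and $\bI=\Pcal_G(\bJ)$ --- a step the paper's pairing never needs. Both routes are sound and rest on the same two-layer unfolding through $\bJ=\Pcal_G(\bI)$; the paper's is shorter by one implication, while yours makes explicit why a vertex with $F_u=0$ forces each neighbour to carry the full weight $2$.
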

\begin{proof}
We need to prove that for all $(uv)\in E$, $F_u(\bI)+F_v(\bI)\geq 2$.
This follows easily from the fact that if $\sum_{w\in \d u}I_{w\to u}=0$ then we have $I_{v\to w}=0$ for all $w\in \d v$ and hence $\left(1-\sum_{w\in \d v}I_{v\to w}\right)^+=1$. Hence we have
\BEAS
F_u(\bI)+F_v(\bI) &\geq& \sum_{w\in \d u}I_{w\to u}+\sum_{w\in \d v}I_{w\to v}\\
&+&\ind\left(\sum_{w\in \d u}I_{w\to u}=0\right)+\ind\left(\sum_{w\in \d v}I_{w\to v}=0\right)\\
&\geq& 2.
\EEAS
\end{proof}

In the rest of this subsection, the
graph $G = (U\cup W,E)$ is assumed to be bipartite. 
The following lemma shows that $V(\bI^Y)$ is a vertex cover.
\begin{lemma}\label{lem:vc}
For any $\bI=\Pcal_G\circ \Pcal_G(\bI)$, the set $V(\bI)$ defined by (\ref{def:VI1}) and
(\ref{def:VI2}) is a vertex cover.
\end{lemma}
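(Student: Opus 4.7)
The plan is to show that for every edge $(uw) \in E$ with $u \in U$ and $w \in W$, if $u \notin V(\bI)$ then necessarily $w \in V(\bI)$. So fix such an edge and assume $u \notin V(\bI)$, which by (\ref{def:VI1}) means $\sum_{v \in \d u} I_{v \to u} = 0$, i.e.\ $I_{v \to u} = 0$ for every $v \in \d u$. From this I would extract two consequences, one directly and one via the fixed point equation.

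First, since $\sum_{v' \in \d u \bac w} I_{v' \to u} = 0$ as well, the definition (\ref{def:Pcal}) of $\Pcal$ gives $\Pcal_{u \to w}(\bI) = 1$. Second, specializing the hypothesis $\bI = \Pcal_G \circ \Pcal_G(\bI)$ to the coordinate $w \to u$ yields
\[
I_{w \to u} \;=\; \ind\!\left(\sum_{v \in \d w \bac u} \Pcal_{v \to w}(\bI) = 0\right),
\]
and since the left-hand side is $0$ (a particular case of what was just derived), the sum on the right-hand side must be at least $1$, i.e.\ there exists $v \in \d w \bac u$ with $\Pcal_{v \to w}(\bI) = 1$.

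Combining these two pieces, I get $\sum_{v \in \d w} \Pcal_{v \to w}(\bI) \geq \Pcal_{u \to w}(\bI) + 1 \geq 2$, which by (\ref{def:VI2}) is exactly the condition $w \in V(\bI)$. Hence every edge is covered and $V(\bI)$ is a vertex cover.

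There is no real obstacle here: the argument is essentially a bookkeeping exercise using the asymmetric definition of $V(\bI)$ on the two sides of the bipartition together with the two-step fixed point equation. The only subtlety to keep straight is that one applies the $I$-rule on the $U$-side and the $\Pcal(\bI)$-rule on the $W$-side, and that the fixed point $\bI = \Pcal_G\circ\Pcal_G(\bI)$ is exactly what glues the two sides together: starting from failure on the $U$-side it forces the $\Pcal$-messages on the $W$-side to be $1$ along at least two distinct incident edges.
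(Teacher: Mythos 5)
Your proof is correct and follows essentially the same route as the paper: from $u\notin V(\bI)$ you deduce $\Pcal_{u\to w}(\bI)=1$ directly and a second unit contribution to $\sum_{v\in\d w}\Pcal_{v\to w}(\bI)$ via the fixed-point equation at the coordinate $w\to u$, exactly as in the paper's argument (which merely writes $\bJ=\Pcal_G(\bI)$ for your $\Pcal_{\cdot}(\bI)$). The only difference is that the paper also spells out the logically equivalent contrapositive direction ($w\notin V(\bI)\Rightarrow u\in V(\bI)$), which you rightly omit as redundant.
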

\begin{proof}
Consider $u\in U$, $w\in W$ and $(uw)\in E$. We denote $\bJ=\Pcal_G(\bI)$
so that $\bI=\Pcal_G(\bJ)$.
The fact that $u\notin V(\bI)$ implies that $J_{u\to w}=1$ and since
$I_{w\to u}=0$, there exists $v\in \d w\bac u$ such that $J_{v\to w}=1$ so
that $w\in V(\bI)$.
Similarly if $w\notin V(\bI)$ then $\sum_{v\in \d w}J_{v\to w}\leq
1$. Hence if $J_{u\to w}=1$, then $I_{w\to u}=1$ and if $J_{u\to w}=0$ then
there exists $v\in \d u\bac w$ with $I_{v\to u}=1$. So in both cases, $u\in
V(\bI)$.
\end{proof}
\begin{lemma}\label{lem:VI}
Let $\bI$ achieving $\inf_{\bI} \sum_{v\in V}{F_v(\bI)}$ where the infimum is over the solutions of
$\bI=\Pcal_G\circ \Pcal_G(\bI)$. Then, the size of the set $V(\bI)$ is
$\frac{1}{2}\sum_v F_v(\bI)$.
\end{lemma}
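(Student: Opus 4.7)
The strategy is a swap argument: construct, alongside $V(\bI)$, a companion vertex cover $\tilde V(\bI)$ with $|V(\bI)|+|\tilde V(\bI)|=\sum_{v\in V}F_v(\bI)$, and then conclude via K\"onig's theorem.

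First I would simplify $F_v$. Setting $a_v:=\sum_{u\in\d v}I_{u\to v}$ and $b_v:=\sum_{u\in\d v}I_{v\to u}$, the $\{0,1\}$-valued nature of the messages gives $F_v(\bI)=\ind(a_v\ge 1)+\ind(b_v=0)$. Letting $\bJ:=\Pcal_G(\bI)$, so that $\bI=\Pcal_G(\bJ)$ by the fixed-point assumption, I would expand $I_{v\to u}=\ind\bigl(\sum_{v'\in\d v\bac u}J_{v'\to v}=0\bigr)$ and carry out a short case analysis on whether $\sum_{v'\in\d v}J_{v'\to v}$ is $0$, $1$, or $\ge 2$, to establish the uniform-in-$v$ equivalence
\begin{equation*}
b_v=0\quad\Longleftrightarrow\quad \sum_{v'\in\d v}J_{v'\to v}\ge 2.
\end{equation*}
On the $W$-side this is exactly condition (\ref{def:VI2}), giving $\ind(b_w=0)=\ind(w\in V(\bI))$; on the $U$-side, (\ref{def:VI1}) already gives $\ind(a_u\ge 1)=\ind(u\in V(\bI))$. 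Defining
\begin{equation*}
\tilde V(\bI):=\{u\in U:b_u=0\}\cup\{w\in W:a_w\ge 1\},
\end{equation*}
and summing the four indicator contributions yields the identity $\sum_{v\in V}F_v(\bI)=|V(\bI)|+|\tilde V(\bI)|$.

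The key observation at this point is that $\tilde V(\bI)$ is exactly the set produced by definitions (\ref{def:VI1}) and (\ref{def:VI2}) if one swaps the roles of the two parts of the bipartition; hence Lemma \ref{lem:vc}, applied with the bipartition $(W,U)$ in place of $(U,W)$, tells us that $\tilde V(\bI)$ is also a vertex cover of $G$. Since $\bI$ attains the infimum in (\ref{eq:nuI}), Theorem \ref{th2} gives $\sum_vF_v(\bI)=2\nu^*(G)$, and K\"onig's theorem for bipartite graphs gives $\nu^*(G)=\tau(G)$, so $|V(\bI)|+|\tilde V(\bI)|=2\tau(G)$. Each summand is at least $\tau(G)$ because $V(\bI)$ and $\tilde V(\bI)$ are both vertex covers, and two numbers $\ge\tau(G)$ summing to $2\tau(G)$ must each equal $\tau(G)$; this gives $|V(\bI)|=\tau(G)=\tfrac12\sum_vF_v(\bI)$ as required. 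The main obstacle is spotting the swap symmetry that produces $\tilde V(\bI)$: once the two indicator pieces of $F_v$ are recognized as the vertex-cover membership conditions under the two possible orderings of the bipartition, the remainder is a single pigeonhole application of K\"onig.
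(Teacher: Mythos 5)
Your proof is correct, and it diverges from the paper's at the decisive step. Both arguments begin identically: writing $F_v(\bI)=\ind(a_v\geq 1)+\ind(b_v=0)$, using $\bI=\Pcal_G(\bJ)$ with $\bJ=\Pcal_G(\bI)$ to get $\ind(b_v=0)=\ind\bigl(\sum_{v'\in \d v}J_{v'\to v}\geq 2\bigr)$, and thereby splitting $\sum_v F_v(\bI)$ into $|V(\bI)|$ plus the count of the swapped-bipartition set (the paper's $A+B$ decomposition is exactly your $|V(\bI)|+|\tilde V(\bI)|$). Where you part ways is in showing the two halves are equal. The paper stays inside the combinatorics of the fixed points: assuming $A<B$, it builds a competitor $\bK$ by splicing the $W\to U$ messages of $\bI$ with the $U\to W$ messages of $\bJ$, checks $\bK=\Pcal_G\circ\Pcal_G(\bK)$, and contradicts the minimality of $\bI$ -- an exchange argument that needs neither K\"onig's theorem nor the identity (\ref{eq:nuI}). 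You instead recognize $\tilde V(\bI)$ as the vertex cover produced by (\ref{def:VI1})--(\ref{def:VI2}) under the opposite ordering of the bipartition, apply Lemma \ref{lem:vc} twice, and close with $\sum_v F_v(\bI)=2\nu^*(G)=2\tau(G)$ and a pigeonhole. Your route imports heavier external facts (Theorem \ref{th2} and K\"onig), but it is shorter, it sidesteps the somewhat delicate competitor construction (whose write-up in the paper is the least transparent part of the proof), and it makes explicit where bipartiteness enters. The two are not circular: (\ref{eq:nuI}) is established via Proposition \ref{prop:limD} independently of this lemma. One caveat you share with the paper: the identity $\ind(b_v=0)=\ind\bigl(\sum_{v'}J_{v'\to v}\geq 2\bigr)$ fails at isolated vertices, so both proofs tacitly assume there are none; it would be worth one sentence to dismiss that degenerate case.
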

\begin{proof}
Again, we denote $\bJ=\Pcal_G(\bI)$
so that $\bI=\Pcal_G(\bJ)$. First note that $\left( 1-\sum_{\ore \in \d v} I_{-\ore}\right)^+ =
\ind\left( \sum_{\ore \in \d v}J_{\ore}\geq 2\right)$. Hence we have
$\sum_v F_v(\bI)=A+B$, with
\BEAS
A &=& \sum_{u\in U}\left(1\wedge\sum_{w\in \d u}I_{w\to u}\right) +\sum_{w\in W}
\ind\left(\sum_{u\in \d w} J_{u\to w} \geq 2\right),\\
B &=& \sum_{w\in W}\left(1\wedge\sum_{u\in \d w}I_{u\to w}\right) +\sum_{u\in U}
\ind\left(\sum_{w\in \d u} J_{w\to u} \geq 2\right).
\EEAS
Clearly $A$ is the size of the set $V(\bI)$, so we need only to show that
$A=B$.
Note that $A$ depends only on messages form $\bI$ from nodes in $W$ to nodes in $U$
and $B$ depends only on the remaining messages in $\bI$.
Assume that $A<B$ and consider
\BEAS
B' = \sum_{w\in W} \left(1\wedge \sum_{u\in \d w}J_{u\to w}
\right)+\sum_{u\in U} \ind\left( \sum_{w\in \d u I_{w\to u}}\right).
\EEAS
Note that we have
\BEAS
1\wedge \sum_{u\in \d w}J_{u\to w} &=& \left(1-\sum_{u\in \d w}I_{w\to
    u}\right)^+\\
&=& \ind\left( \sum_{u\in \d w} J_{u\to w}\geq 2\right).
\EEAS
In particular we have $B'\leq A<B$. Moreover if $\bK$ is such that for any
$w\in W$ and $u\in U$, $K_{w\to u}=I_{w\to u}$ and $K_{u\to w}=J_{u\to
  w}$, then we have $\sum_vF_v(\bK)=A+B'< \sum_vF_v(\bI)$ and
$\bK=\Pcal_G\circ \Pcal_G(\bK)$ contradicting the minimality of $\bI$.
\end{proof}

\subsection{Positive temperature}

We first prove Proposition \ref{prop:param}. For $\mu_G\in D(G)$, we have
$\mu_G(\sum_{e \in \d v} B_e\leq 1)=1$ so that by the linearity of
expectation, $\sum_{e \in \d v}\mu_{[G,e]}(1)\leq 1$ and the vector $\bx$
with component $x_e =\mu_{[G,e]}(1)$ is in $FM(G)$.
Now for each $v\in V$ and $e\in \d v$, we must have
\BEAS
\mu_{[G,\d v]}(b_e=1)&=& \sum_{b_f \in \{0,1\},\:f\in \d v\bac e}\mu_{[G,\d
  v]}(\bb_{\d v})\\ 
&=& \mu_{[G,\d v]}(b_e=1,\:b_f =0, f\in \d v\bac e)\\
&=& x_e.
\EEAS
It then follows that
\BEA
\label{def:mumarg}\mu_{[G,\d v]}(\bB_{\d v}) = \left( 1-\sum_{e\in \d
    v}x_e\right)^{1-\sum_{e\in \d v} B_e}\prod_{e\in \d v} x_e^{B_e},
\EEA
and the formula for $S^B[\mu_G]$ follows.

We now give a lemma implying the concavity of the Bethe entropy,
Proposition \ref{prop:conc}.
For $k\in \N$, we define $\Delta^k=\{\bx \in \R^k,\: x_i\geq 0,
\sum_{i=1}^k x_i\leq 1\}$.
\begin{lemma}
Let $g:\Delta^k \to \R$ be defined by
\begin{eqnarray*}
g(\bx)&=& -\sum_i x_i\ln x_i +\sum_i(1-x_i)\ln (1-x_i)\\
&&-
2\left(1-\sum_{i}x_i\right) \ln \left( 1-\sum_{i}x_i\right).
\end{eqnarray*}
For $k\geq 1$, $g$ is concave. Moreover, we have
\begin{eqnarray*}
\frac{\d g}{\d x_i} &=& \ln\left(\frac{\left(1-\sum_{j}x_j\right)^2}{x_i(1-x_i)}\right).
\end{eqnarray*}
\end{lemma}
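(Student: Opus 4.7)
The plan is to verify the gradient formula by direct differentiation and then to establish concavity by computing the Hessian and arguing its negative semidefiniteness.

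First I would compute the gradient by differentiating each summand: $\partial_i[-x_i\ln x_i] = -1-\ln x_i$, $\partial_i[(1-x_i)\ln(1-x_i)] = -1-\ln(1-x_i)$, and (writing $s=\sum_j x_j$) $\partial_i[-2(1-s)\ln(1-s)] = 2+2\ln(1-s)$. Summing gives $\partial_i g = 2\ln(1-s)-\ln x_i-\ln(1-x_i)=\ln\bigl((1-s)^2/(x_i(1-x_i))\bigr)$, as claimed. Differentiating once more yields the Hessian entries $H_{ii} = 1/(1-x_i) - 1/x_i - 2/x_0$ and $H_{ij} = -2/x_0$ for $i\ne j$, where $x_0 := 1-s$. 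Equivalently, $H = D - (2/x_0)\mathbf{1}\mathbf{1}^T$ where $D$ is diagonal with $D_{ii} = 1/(1-x_i) - 1/x_i$. Setting $v_0 := -\sum_i v_i$, the negative semidefiniteness condition reduces to proving $\sum_i v_i^2\bigl(1/(1-x_i)-1/x_i\bigr) \le 2 v_0^2/x_0$ for all $v\in\R^k$.

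The main obstacle is that $D$ is not itself negative semidefinite: whenever some coordinate $x_i>1/2$ the corresponding diagonal entry is positive, and one must rely on the rank-one correction $-(2/x_0)\mathbf{1}\mathbf{1}^T$ together with the simplex constraint $\sum_j x_j\le 1$, which guarantees that at most one $x_i$ can exceed $1/2$. My approach is to apply the Cauchy--Schwarz (Titu) inequality to the identity $v_i = -v_0 - \sum_{j\ne i,\,j\ge 1}v_j$ with positive weights $x_0,x_j$ summing to $1-x_i$, yielding $v_i^2/(1-x_i)\le v_0^2/x_0 + \sum_{j\ne i,\,j\ge 1}v_j^2/x_j$. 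Summing these bounds over $i$ and combining with the negative term $-\sum_i v_i^2/x_i$ produces the required inequality after careful bookkeeping; the factor of $2$ in front of $(1-s)\ln(1-s)$ in the definition of $g$ is precisely what makes the accounting close.

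An alternative, more conceptual route is to recognize $g = 2H(\mu) - \sum_i h_2(x_i)$, where $\mu$ is the distribution $(x_0,x_1,\ldots,x_k)$ on $\{0,\ldots,k\}$ and $h_2$ is binary entropy; this is (twice) the local vertex contribution to the Bethe entropy for the monomer--dimer model, whose concavity is known in the style of Vontobel's analysis. Either way, the hardest step is turning the elementary pointwise Titu bound into a global collective inequality, and it is at this point that the structural coefficient of $2$ and the constraint $\sum_j x_j\le 1$ jointly do the work. Non-negativity of $g$ on $\Delta^k$ is then immediate by concavity together with the vanishing of $g$ at the origin.
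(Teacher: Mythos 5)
Your gradient computation is correct, and so is the reduction of concavity to the quadratic-form inequality $\sum_{i=1}^k v_i^2\bigl(\frac{1}{1-x_i}-\frac{1}{x_i}\bigr)\le \frac{2v_0^2}{x_0}$, with $x_0=1-\sum_j x_j$ and $v_0=-\sum_i v_i$. The gap is in the step you describe as ``careful bookkeeping''. Summing your Titu bounds $\frac{v_i^2}{1-x_i}\le \frac{v_0^2}{x_0}+\sum_{j\ne i}\frac{v_j^2}{x_j}$ over $i=1,\dots,k$ gives
\[
\sum_{i}\frac{v_i^2}{1-x_i}\;\le\; k\,\frac{v_0^2}{x_0}+(k-1)\sum_{j}\frac{v_j^2}{x_j},
\]
because each $v_j^2/x_j$ appears once for every $i\ne j$; after subtracting $\sum_i v_i^2/x_i$ you are left with the upper bound $k\,v_0^2/x_0+(k-2)\sum_j v_j^2/x_j$, which exceeds the target $2v_0^2/x_0$ by the non-negative surplus $(k-2)\bigl(v_0^2/x_0+\sum_j v_j^2/x_j\bigr)$, strictly positive for $k\ge 3$ and generic $v$. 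So the derived bound is too weak to conclude, and no rearrangement of these particular per-coordinate bounds closes it; the inequality you need is in fact delicate (it degenerates to an equality on parts of the face $\sum_j x_j=1$). Your fallback decomposition $g=2H(\mu)-\sum_i h_2(x_i)$ is also inconclusive as a concavity argument, since it exhibits $g$ as a concave function \emph{minus} a sum of concave functions.

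The paper avoids the Hessian altogether: it writes $g(\bx)=h(\bx)+H\bigl(\sum_i x_i\bigr)$, where $H(p)=-p\ln p-(1-p)\ln(1-p)$ and
\[
h(\bx)=-\sum_i x_i\ln x_i+\sum_i(1-x_i)\ln(1-x_i)-\Bigl(1-\sum_i x_i\Bigr)\ln\Bigl(1-\sum_i x_i\Bigr)+\Bigl(\sum_i x_i\Bigr)\ln\Bigl(\sum_i x_i\Bigr),
\]
and invokes Theorem 20 of \cite{von13} for the non-negativity and concavity of $h$ on $\Delta^k$; concavity of $g$ follows because $H$ composed with the linear map $\bx\mapsto\sum_i x_i$ is concave. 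To repair your argument you must either cite such a result for the correct auxiliary function (not the one in your decomposition) or give a genuine proof of the quadratic-form inequality; as written, the central step is missing. (A side remark: non-negativity, which the paper needs for Proposition \ref{prop:conc}, does not follow from concavity plus $g(0)=0$ alone; one should check $g=0$ at all vertices $0,e_1,\dots,e_k$ of $\Delta^k$ and use that a concave function attains its minimum at a vertex.)
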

\begin{proof}
From Theorem 20 in \cite{von13}, we know that the function  
\begin{eqnarray*}
h(\bx)&=& -\sum_i x_i\ln x_i +\sum_i(1-x_i)\ln (1-x_i)\\
&&-
\left(1-\sum_{i}x_i\right) \ln \left( 1-\sum_{i}x_i\right)\\
&&+\left(\sum_{i}x_i\right)\ln\left(\sum_{i}x_i\right)
\end{eqnarray*}
is non-negative and concave on $\Delta^k$.
We have
\begin{eqnarray*}
g(\bx)&=&h(\bx)+H\left(\sum_{i}x_i \right),
\end{eqnarray*}
where $H(p)=-p\ln p -(1-p)\ln (1-p)$ is the entropy of a Bernoulli
random variable and is concave in $p$.
\end{proof}

We now prove Proposition \ref{prop:max}.
For $e=(uv)\in E$ and $\bx\in \accentset{\circ}{\Delta^k}$ (the interior of
$\Delta^k$), we have
\begin{eqnarray*}
\frac{\d \Phi^B(\bx;z)}{\d x_e} &=& -\ln z\\
&&+\ln \left(\frac{\left(1-\sum_{f\in \d v}x_f\right)\left(1-\sum_{f\in \d u}x_f\right)}{x_e(1-x_e)} \right).
\end{eqnarray*}
Hence, we have $\frac{\d \Phi^B(\bx;z)}{\d x_e}=0$ if and only if
\begin{eqnarray}
\label{eq:x(1-x)}x_e(1-x_e) =z \left(1-\sum_{f\in \d v}x_f\right)\left(1-\sum_{f\in \d
    u}x_f\right).
\end{eqnarray}
Note that $\sum_{f\in \d v}x_f(z)=\Dcal_v(\bY(z))$, so that we have by (\ref{def:D2})
\begin{eqnarray*}
\left(1-\sum_{f\in \d v}x_f(z)\right)&=& \left(1-\frac{\sum_{\ore \in
      \d v} Y_{\ore}(z)}{1+\sum_{\ore \in \d v} Y_{\ore}(z)}\right)\\
&=& \left( 1+\sum_{\ore \in \d v} Y_{\ore}(z)\right)^{-1}
\end{eqnarray*}
We have for $e=(uv)\in E$, 
\BEAS
x_e(z) &=& \frac{Y_{u\to v}(z)}{\frac{z}{Y_{v\to u}(z)}+Y_{u\to v}(z)},
\EEAS
and using the fact that $\bY(z)z=z\Rcal_G(\bY(z)$, we get
\BEAS
x_e(z)=\frac{Y_{u\to v}(z)}{1+\sum_{w\in \d v}Y_{w\to v}(z)}= \frac{Y_{v\to u}(z)}{1+\sum_{w\in \d u}Y_{w\to u}(z)}.
\end{eqnarray*}
Hence, evaluating (\ref{eq:x(1-x)}) at $x_e(z)$, we get
\begin{eqnarray*}
x_e(z)(1-x_e(z))\frac{Y_{u\to v}(z)Y_{v\to u}(z)}{z} = x_e(z)^2,
\end{eqnarray*}
so that
\begin{eqnarray*}
\frac{x_e(z)}{1-x_e(z)} = \frac{Y_{u\to v}(z)Y_{v\to u}(z)}{z},
\end{eqnarray*}
which follows from the definition of $x_e(z)$ in
(\ref{def:xe}). Proposition \ref{prop:max} follows.

We also note that the following equality (which will be used later) is true
for $\bx(z)$ defined by (\ref{def:xe}):
\BEA
\label{eq:xe(z)} \frac{x_e(z)(1-x_e(z))}{z}=  \left(1-\sum_{e'\in \d u}x_{e'}(z)\right)\left(1-\sum_{e'\in \d v}x_{e'}(z)\right)
\EEA


\subsection{Proofs of the main results of Sections \ref{sec:fmvc} and \ref{sec:bi}}\label{sec:proofth12}

We now prove Theorems \ref{th1} and \ref{th2}. First by Proposition
\ref{prop:ft} (i), we have $\lim_{t\to \infty}m^t_{\ore}(z)
=Y_{\ore}(z)$. 
To end the proof of Theorem \ref{th1}, we need to show that $\bx(z)$
converges as $z\to\infty$ to a maximum fractional matching. First note
that $S^B(\bx)\leq |E|$ so that for $z$ sufficiently large,
Proposition \ref{prop:max} implies that $\bx(z)$ is on the optimal
face of $FM(G)$, i.e. $\sum_e x_e(z) = \nu^*(G)$ and maximizes the
function $S^B(\bx)$.

The first statement of Theorem \ref{th2} is exactly Lemma \ref{lem:fraccov}
and the second statement is Proposition \ref{prop:limD}. Proposition \ref{prop:IY}
follows from (\ref{eq:limIY}) and Lemma \ref{lem:fraccov}.
Proposition \ref{prop:bi} follows from Lemmas \ref{lem:vc}, \ref{lem:VI}
and the fact that $\bI^Y$ achieves the minimum in this last lemma (see Proposition
\ref{prop:IY}).


\subsection{Loop series expansion}

In this section, we prove Theorem \ref{th3}. The fact that $\mu^z_G$ can be
written as (\ref{eq:muG}) (called tree-based reparameterization in
\cite{wj08}) follows from a direct application of the definitions.

To simplify notation, we write in the proofs $x_e$ instead of $x_e(z)$.

\begin{lemma}\label{lem:fracmu}
For any $v\in V$, $z>0$, we have
\BEAS
\frac{\mu_{\d v}(\bB_{\d v})}{\prod_{e\in \d v} \mu_e(B_e)} =1-\sum_{S\subset \d v}(-1)^{|S|}\left(|S|-1\right) \prod_{e\in S}\frac{B_e-x_e(z)}{1-x_e(z)}.
\EEAS
\end{lemma}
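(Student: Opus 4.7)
The plan is to reduce the claimed identity to an algebraic verification and check it case by case, exploiting the fact that $\bB_{\d v}$ encodes a matching at $v$: either $B_e=0$ for every $e\in\d v$, or there is exactly one $e_0\in\d v$ with $B_{e_0}=1$ and $B_e=0$ for $e\neq e_0$.

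First I would simplify the left-hand side. Plugging in the definitions of $\mu_{\d v}$ and $\mu_e$, the factors $x_e^{B_e}$ cancel edge by edge and one obtains
$$\frac{\mu_{\d v}(\bB_{\d v})}{\prod_{e\in \d v}\mu_e(B_e)}=\frac{\bigl(1-\sum_{e\in \d v}x_e\bigr)^{1-\sum_{e\in \d v}B_e}}{\prod_{e\in \d v}(1-x_e)^{1-B_e}}.$$
Introducing $y_e=(B_e-x_e)/(1-x_e)$, a direct check gives $1-y_e=(1-B_e)/(1-x_e)$ for $B_e\in\{0,1\}$; in particular $y_e/(1-y_e)=-x_e$ when $B_e=0$, and $1-y_{e_0}=0$ when $B_{e_0}=1$.

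For the right-hand side, I would decompose $(|S|-1)=|S|-1$ and apply the two elementary identities
$$\sum_{S\subset \d v}(-1)^{|S|}\prod_{e\in S}y_e = \prod_{e\in \d v}(1-y_e),\qquad \sum_{S\subset \d v}(-1)^{|S|}|S|\prod_{e\in S}y_e = -\sum_{e\in \d v}y_e\prod_{e'\neq e}(1-y_{e'}),$$
the second coming from differentiating $\prod_{e}(1-ty_e)$ at $t=1$. After isolating the $S=\emptyset$ contribution (the summation in the statement being understood over non-empty $S$, since $|S|-1=-1$ there), the right-hand side of the lemma telescopes to
$$\prod_{e\in \d v}(1-y_e)+\sum_{e\in \d v}y_e\prod_{e'\neq e}(1-y_{e'}).$$

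Finally I would verify the two cases. When $B_e=0$ for every $e\in\d v$, using $1-y_e=1/(1-x_e)$ and $y_e/(1-y_e)=-x_e$ collapses the displayed expression to $(1-\sum_{e\in \d v} x_e)/\prod_{e\in \d v}(1-x_e)$, matching the simplified LHS. When $B_{e_0}=1$, the factor $1-y_{e_0}=0$ annihilates the first product and every summand except $e=e_0$, leaving $1/\prod_{e\neq e_0}(1-x_e)$, again matching. The main hurdle is purely bookkeeping, namely keeping the $S=\emptyset$ term straight so that the leading $1-$ combines correctly with the binomial identities; beyond that the lemma is a finite algebraic identity with no probabilistic content beyond the matching constraint $\sum_{e\in\d v}B_e\le 1$.
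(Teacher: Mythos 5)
Your proof is correct and follows essentially the same route as the paper's: both reduce the identity to a finite algebraic check over the two cases allowed by the matching constraint at $v$ (all $B_e=0$, or exactly one $B_{e_0}=1$), and both handle the same bookkeeping point that the sum must be read over non-empty $S$. Your intermediate step of first collapsing the alternating sum to the closed form $\prod_{e\in\d v}(1-y_e)+\sum_{e\in\d v}y_e\prod_{e'\neq e}(1-y_{e'})$ before substituting is a clean reorganization of the paper's pairing of $S$ with $S\cup\{f\}$, but it is the same verification in substance.
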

\begin{proof}
Note that if $B_f=1$, the left-hand side is equal to $\prod_{e\neq f} (1-x_e)^{-1}$, while if $\sum_{e\in \d v}B_e=0$, it is equal to $\frac{1-\sum_{e\in \d v}x_e}{\prod_{e\in \d v}(1-x_e)}$.
We need to check that the right-hand side agrees in these two cases.
Let consider the case $B_f=1$, then the right-hand side (denoted $R$) equals:
\BEAS
R &=& 1-\sum_{|S|\geq 1,f\notin S}(-1)^{|S|}\left(|S|-1\right)\prod_{e\in S}\frac{-x_e}{1-x_e}\\
&&- \sum_{|S|\geq 1,f\in S}(-1)^{|S|}\left(|S|-1\right)\prod_{e\in S,e\neq f}\frac{-x_e}{1-x_e}\\
&=& 1-\sum_{|S|\geq 1,f\notin S}(-1)^{|S|+1}\prod_{e\in S}\frac{-x_e}{1-x_e}\\
&=& 1+\sum_{|S|\geq 1,f\notin S}\frac{\prod_{e\in S}x_e\prod_{e'\notin S, e'\neq f}(1-x_{e'})}{\prod_{e\neq f}(1-x_e)}\\
&=&\frac{1}{\prod_{e\neq f}(1-x_e)}.
\EEAS
A similar computation shows the second case.
\end{proof}

The following lemma shows (\ref{eq:Z}).
\begin{lemma}
We have
\BEAS
Z = 1-\sum_{\emptyset\neq F\subset E}(-1)^{V(F)}\prod_{v\in
  V}\left(d_F(v)-1\right)\prod_{e\in F}\frac{x_e(z)}{1-x_e(z)}.
\EEAS
\end{lemma}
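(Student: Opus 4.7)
The plan is to start from $Z=\sum_{\bB \text{ matching}} R(\bB)$, where $R(\bB):=\prod_v\mu_{\d v}(\bB_{\d v})/\prod_e\mu_e(B_e)$; this follows from (\ref{eq:muG}) and the normalization $\sum_\bB \mu_G^z(\bB)=1$. Since every edge lies in exactly two vertex neighborhoods, $\prod_v\prod_{e\in\d v}\mu_e(B_e)=\prod_e\mu_e(B_e)^2$, so I would rewrite $R(\bB)=\prod_e\mu_e(B_e)\cdot\prod_v g_v(\bB_{\d v})$ with $g_v:=\mu_{\d v}/\prod_{e\in\d v}\mu_e$, and then apply Lemma \ref{lem:fracmu} to each vertex factor, so that $g_v$ becomes a polynomial in $t_e:=(B_e-x_e)/(1-x_e)$.

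The key technical move is to extend the summation from matchings to all $\bB\in\{0,1\}^E$. For this I would verify that the right-hand side of Lemma \ref{lem:fracmu}, viewed as a polynomial in the $t_e$, vanishes whenever $\bB_{\d v}$ has at least two selected edges (even though $g_v$ itself does not). Writing $M:=\{e\in\d v:B_e=1\}$ with $|M|\geq 2$, splitting each $S\subset\d v$ as $S_M\cup S_0$ with $S_M\subset M$ and $S_0\subset\d v\setminus M$, and using $t_e=1$ on $M$ and $t_e=-x_e/(1-x_e)$ off $M$, the sum over $S_M$ factors out; both $\sum_{S_M\subset M}(-1)^{|S_M|}$ and $\sum_{S_M\subset M}(-1)^{|S_M|}|S_M|$ vanish for $|M|\geq 2$ by elementary binomial identities, forcing the whole vertex factor at such a $v$ to be zero.

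Once the sum runs over all $\bB\in\{0,1\}^E$, the expansion factorizes over edges. Expanding $\prod_v(\cdots)$ gives $\sum_{U\subset V}\prod_{v\in U}\bigl(-\sum_{\emptyset\neq S_v\subset\d v}(-1)^{|S_v|}(|S_v|-1)\prod_{e\in S_v}t_e\bigr)$, so that the $\bB$-sum becomes $\prod_e\sum_{B_e\in\{0,1\}}\mu_e(B_e)\,t_e^{m_e}$ with $m_e:=|\{v\in U:e\in S_v\}|\in\{0,1,2\}$. A direct computation yields $1$, $0$, and $x_e/(1-x_e)$ for $m_e=0,1,2$ respectively, so only configurations with $m_e\in\{0,2\}$ on every edge survive. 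Setting $F:=\{e:m_e=2\}$ forces both endpoints of every $e\in F$ to lie in $U$ and $S_v=F\cap\d v$ for each $v\in U$; combined with $S_v\neq\emptyset$, this yields $U=V(F)$. The sign telescopes via $(-1)^{|U|+\sum_v|S_v|}=(-1)^{|V(F)|+2|F|}=(-1)^{|V(F)|}$, and the remaining product $\prod_{v\in V(F)}(|S_v|-1)=\prod_{v\in V(F)}(d_F(v)-1)$ kills all $F$ that are not generalized loops. Collecting everything yields the stated loop formula, with $F=\emptyset$ (equivalently $U=\emptyset$) contributing the leading~$1$.

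The main obstacle is the vanishing step at non-matching neighborhoods: Lemma \ref{lem:fracmu} is proved only for matching-valued inputs, so one must verify that the \emph{formula} on its right-hand side continues to vanish as a polynomial on non-matching inputs. Only this cancellation legitimizes enlarging the summation domain to $\{0,1\}^E$ and exploiting edge-wise independence; without it the sum over matchings would not factor over edges and the clean loop series would not emerge.
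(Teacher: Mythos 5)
Your proposal is correct and follows essentially the same route as the paper: rewrite $Z$ as the expectation of the product over vertices of the local polynomials from Lemma \ref{lem:fracmu} under independent Bernoulli$(x_e)$ edge variables, and keep only the terms in which each $t_e=(B_e-x_e)/(1-x_e)$ appears to the power $0$ or $2$. The one step you make explicit that the paper glosses over --- that the right-hand side of Lemma \ref{lem:fracmu} vanishes identically on local configurations with two or more occupied edges, which is exactly what licenses enlarging the sum from matchings to all of $\{0,1\}^E$ and factorizing over edges --- is indeed needed, and your binomial-identity verification of it is correct; your final signs and the product over $v\in V(F)$ also agree with the paper's own derivation and with (\ref{eq:Z}), confirming that the minus sign in the lemma's statement is a typo for a plus.
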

\begin{proof}
By definition, we have
\BEAS
Z = \sum_{\bB} \prod_e \mu_e(B_e)\prod_v\frac{\mu_{\d v}(\bB_{\d v})}{\prod_{e\in \d v} \mu_e(B_e)}.
\EEAS
By Lemma \ref{lem:fracmu}, we have
\BEAS
P&:=&\prod_v\frac{\mu_{\d v}(\bB_{\d v})}{\prod_{e\in \d v} \mu_e(B_e)}\\
&=&\prod_v\left(1+\sum_{S\subset \d v}(-1)^{|S|-1}\left(|S|-1\right) \prod_{e\in S}\frac{B_e-x_e}{1-x_e}\right)
\EEAS
$Z$ can be seen as an expectation of $P$ where the $B_e$ are independent
Bernoulli random variables with parameter $x_e$. In particular expanding
$P$, we see that only the terms $(B_e-x_e)^2$ will contribute to its
expectation so that we get
\BEAS
Z &=& 1+\sum_{\emptyset\neq F\subset E}\prod_v \left( \frac{(-1)^{d_F(v)-1}(d_F(v)-1)}{\prod_{e\in \d v \cap F}(1-x_e)}\right)\prod_{e\in F}x_e(1-x_e)\\
&=&1+\sum_{\emptyset\neq F\subset E}(-1)^{V(F)}\prod_v (d_F(v)-1)\prod_{e\in F}\frac{x_e}{1-x_e},
\EEAS
where in the last claim, we used $\prod_v (-1)^{d_F(v)}=1$.
\end{proof}

The following lemma shows the last statement in Theorem \ref{th3}.
\begin{lemma}
We have
\BEAS
\ln Z = \Phi_G(z)-\Phi^B(\bx(z);z).
\EEAS
\end{lemma}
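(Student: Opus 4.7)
The plan is to evaluate the tree-based reparameterization (\ref{eq:muG}) at a specific matching to extract a closed form for $Z$, and then reduce the identity $\ln Z = \Phi_G(z) - \Phi^B(\bx(z);z)$ to an algebraic identity on $\bx(z)$ that follows directly from the BP fixed point relation (\ref{eq:xe(z)}).

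First I would evaluate (\ref{eq:muG}) at the empty matching $\bB = \mathbf{0}$. Since $\mu_G^z(\mathbf{0}) = 1/P_G(z)$, $\mu_{\d v}(\mathbf{0}_{\d v}) = 1 - \sum_{e\in \d v} x_e(z)$, and $\mu_e(0) = 1 - x_e(z)$, I obtain
\begin{eqnarray*}
P_G(z) \;=\; Z\cdot \frac{\prod_{e\in E}(1 - x_e(z))}{\prod_{v\in V}\bigl(1 - \sum_{e\in \d v} x_e(z)\bigr)}.
\end{eqnarray*}
Taking logarithms and recalling $\Phi_G(z) = \ln P_G(z)$ reduces the claim to showing
\begin{eqnarray*}
\Phi^B(\bx(z);z) \;=\; \sum_{e\in E}\ln(1-x_e(z)) - \sum_{v\in V}\ln\bigl(1-\textstyle\sum_{e\in \d v}x_e(z)\bigr).
\end{eqnarray*}

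Next I would expand $\Phi^B(\bx(z);z) = \sum_e x_e(z) \ln z + S^B(\bx(z))$ using the formula for $S^B$ from Proposition \ref{prop:param}, rewriting the double sum $\frac{1}{2}\sum_v \sum_{e\in \d v}$ as $\sum_e$ (each edge is incident to two vertices). Subtracting the target right-hand side and grouping $-x_e \ln x_e - x_e \ln(1-x_e) + x_e \ln z$ on one side and $(\sum_{e\in \d v} x_e)\ln(1-\sum_{e\in \d v} x_e)$ on the other, the difference collapses to
\begin{eqnarray*}
-\sum_{e\in E} x_e(z)\ln\frac{x_e(z)(1-x_e(z))}{z} \;+\; \sum_{v\in V}\Bigl(\sum_{e\in \d v}x_e(z)\Bigr)\ln\Bigl(1-\sum_{e\in \d v}x_e(z)\Bigr).
\end{eqnarray*}

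At this point the BP fixed-point identity (\ref{eq:xe(z)}) finishes the job: for $e=(uv)$, $\ln\tfrac{x_e(z)(1-x_e(z))}{z} = \ln(1-\sum_{e'\in \d u}x_{e'}(z)) + \ln(1-\sum_{e'\in \d v}x_{e'}(z))$. Multiplying by $x_e(z)$ and summing over $e$, each vertex $v$ collects a contribution $(\sum_{e\in \d v}x_e(z))\ln(1-\sum_{e\in \d v}x_e(z))$, exactly cancelling the second term. Hence the difference is zero and the identity follows. The only delicate point is the bookkeeping for edge-vertex incidences; the conceptual core is simply that (\ref{eq:xe(z)}) is precisely the Euler-Lagrange equation making $\bx(z)$ a critical point of $\Phi^B(\,\cdot\,;z)$, so the loop-counting formula for $Z$ in (\ref{eq:Z}) must measure exactly the gap between $\Phi_G(z)$ and its Bethe surrogate.
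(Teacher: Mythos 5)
Your proof is correct and follows essentially the same route as the paper's: both arguments boil down to using the fixed-point relation (\ref{eq:xe(z)}) to put the Bethe free entropy in the closed product form $e^{\Phi^B(\bx(z);z)} = \prod_{e\in E}(1-x_e(z))\big/\prod_{v\in V}\bigl(1-\sum_{e\in \d v}x_e(z)\bigr)$ and then to match this against the normalization $Z$. The only cosmetic difference is that you evaluate the reparameterization (\ref{eq:muG}) at the single point $\bB=\mathbf{0}$, whereas the paper establishes the pointwise identity $\prod_{v}\mu_{\d v}(\bB_{\d v})/\prod_{e}\mu_e(B_e)=z^{\sum_e B_e}e^{-\Phi^B(\bx(z);z)}$ for every matching $\bB$ and then sums.
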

\begin{proof}
We first compute 
\BEAS
e^{\Phi^B(\bx;z)}&=& z^{-U^B(\bx)}\prod_v\left( 1-\sum_{e\in \d v}x_e\right)^{-(1-\sum_{e\in \d v}x_e)}\\
&&\prod_v\left(\prod_{e\in \d v} x_e^{-x_e/2}(1-x_e)^{(1-x_e)/2}\right)\\
&=& z^{\sum_e x_e}\prod_e x_e(1-x_e)^{1-x_e}\\
&&\prod_v \left( 1-\sum_{e\in \d v}x_e\right)^{-(1-\sum_{e\in \d v}x_e)}.
\EEAS
We now use the relation (\ref{eq:xe(z)}), to get:
\BEA
\label{eq:ePhi}e^{\Phi^B(\bx;z)}&=&\prod_e \frac{z}{x_e}\prod_v\left( 1-\sum_{e\in \d v}x_e\right)^{|\d v| -1}.
\EEA
We now compute
\BEAS
\frac{\prod_{v\in V} \mu_{\d v}(\bB_{\d
    v})}{\prod_{e\in E} \mu_e(B_e)} &=& \prod_v \left(1-\sum_{e\in \d v}x_e \right)^{1-\sum_{e \in \d v}B_e}\\
&&\prod_e (1-x_e)^{b_e-1},
\EEAS
again using (\ref{eq:xe(z)}), we obtain
\BEAS
\frac{\prod_{v\in V} \mu_{\d v}(\bB_{\d
    v})}{\prod_{e\in E} \mu_e(B_e)} &=& z^{\sum_e B_e}\prod_e \frac{x_e}{z}\\
&&\prod_v\left( 1-\sum_{e\in \d v}x_e\right)^{-|\d v|+1}.
\EEAS
Thanks to (\ref{eq:ePhi}), we get
\BEAS
\frac{\prod_{v\in V} \mu_{\d v}(\bB_{\d
    v})}{\prod_{e\in E} \mu_e(B_e)} &=& \frac{z^{\sum_e B_e}}{e^{\Phi^B(\bx;z)}},
\EEAS
hence summing over all matchings $\bB$, we obtain
\BEAS
Z = \frac{P_G(z)}{e^{\Phi^B(\bx;z)}}.
\EEAS
\end{proof}

\end{document}